\DeclareMathOperator{\sign}{sign}
\newcommand{\Mathematica}{\textit{Mathematica\textsuperscript{\resizebox{!}{0.8ex}{\textregistered}}}}
\def\8{\infty}
\def\oh{\sfrac{1}{2}}
\newcommand*{\I}{\imath}%
\def\const{\textit{ const }}
\def\undertext#1{\vtop{\hbox{#1}\kern 1pt \hrule}}
\def\abs#1{\left| #1\right|}
\def\pd#1{\partial_{#1}}
\def\VEV#1{\left\langle #1\right\rangle}
\def\ff#1{\frac{\delta}{\delta#1}}
\def\fbyf#1#2{\frac{\delta#1}{\delta#2}}
\def\bea{\begin{eqnarray} && &&}
\def\eea{\end{eqnarray}}
\let\oldexp\exp
\renewcommand{\exp}[1]{\oldexp\left(#1\right)}
\def\RE{\textbf{Rey}}
\def\NS{Navier-Stokes}
\def\BS{Biot-Savart}
\def\val{v_{\alpha}}
\def\vbe{v_{\beta}}
\def\vga{v_{\gamma}}
\def\rbe{r_{\beta}}
\def\oal{\omega_{\alpha}}
\def\obe{\omega_{\beta}}
\def\oga{\omega_{\gamma}}
\newcommand{\Mod}[1]{\ (\mathrm{mod}\ #1)}
\def\XXint#1#2#3{{\setbox0=\hbox{$#1{#2#3}{\int}$}
     \vcenter{\hbox{$#2#3$}}\kern-.5\wd0}}
\newcommand{\bS}{\mathbb{S}}
\renewcommand{\Re}{\textbf{Re }}
\renewcommand{\Im}{\textbf{Im }}
\newcommand{\tpmod}[1]{{\@displayfalse\Mod{#1}}}
\DeclarePairedDelimiter\floor{\lfloor}{\rfloor}
\newcommand{\pctPDFRot}[2]{
\begin{figure}
    \centering
    % \fbox{\makebox[5cm][c]{Picture}\rule[-1.5cm]{0pt}{3cm}}
    \includegraphics[angle=-90,width=1.0\textwidth]{#1.pdf}
    \caption{#2}
    \label{fig::#1}
\end{figure}
}
\newcommand{\pctPDF}[2]{
\begin{figure}
    \centering
    % \fbox{\makebox[5cm][c]{Picture}\rule[-1.5cm]{0pt}{3cm}}
    \includegraphics[width=\textwidth]{#1.pdf}
    \caption{#2}
    \label{fig::#1}
\end{figure}
}
\newcommand{\pctWPDF}[3]{
\begin{figure}
    \centering
    % \fbox{\makebox[5cm][c]{Picture}\rule[-1.5cm]{0pt}{3cm}}
    \includegraphics[width=#1\textwidth]{#2.pdf}
    \caption{#3}
    \label{fig::#2}
\end{figure}
}
\newtheorem{mytheorem}{Theorem}
\begin{document}

% \markboth{Authors' Names}{Instructions for typing manuscripts (paper's title)}

%%%%%%%%%%%%%%%%%%%%% Publisher's Area please ignore %%%%%%%%%%%%%%%
%
\catchline{}{}{}{}{}
%
%%%%%%%%%%%%%%%%%%%%%%%%%%%%%%%%%%%%%%%%%%%%%%%%%%%%%%%%%%%%%%%%%%%%

\title{Duality of Navier-Stokes to a one-dimensional system}

\author{Alexander Migdal\footnote{
amigdal@ias.edu}
}

\address{Member, School of Mathematics,\\ Institute for Advanced Study\footnote{
1 Einstein Drive, Princeton, New Jersey, 08540, USA}}

\maketitle

\begin{history}
\received{Day Month Year}
\revised{Day Month Year}
\accepted{Day Month Year}
\published{Day Month Year}
\end{history}

\begin{abstract}
The Navier–Stokes (NS) equations describe fluid dynamics through a high-dimensional, nonlinear partial differential equations (PDEs) system. Despite their fundamental importance, their behavior in turbulent regimes remains incompletely understood, and their global regularity is still an open problem. Here, we reformulate the NS equations as a nonlinear equation for the momentum loop $\vec P(\theta, t)$, effectively reducing the original three-dimensional PDE to a one-dimensional problem. A key result of this reformulation is the derivation of a \textit{No Explosion Theorem}, establishing that finite-time singularities do not occur under stochastic initial conditions. We also present an explicit analytical solution—the Euler ensemble—which describes the universal asymptotic state of decaying turbulence and is supported by numerical simulations and experimental validation.
\end{abstract}

\keywords{Fluid Dynamics; Duality; Turbulence.}

\ccode{PACS numbers: 03.65.$-$w, 04.62.+v}
\section{Summary for QFT/String theorists}

The turbulence problem used to be a clean, unsolved problem of theoretical physics \cite{Feynman}: find a mathematical description of the chaotic circular motion of a fluid subject to the \NS{} equations. The solution should explain the origin of the statistical distribution in the nonlinear PDE without any random forcing and provide a method to compute the statistical properties, such as observed energy spectra and exponents of the time decay of turbulent kinetic energy.
\begin{eqnarray}
    &&\partial_t \vec v = -\nu \nabla \times \vec \omega - \vec v\times \vec \omega - \vec \nabla \left( p + \frac{\vec v^2}{2} \right);\\
    && \nabla \cdot \vec v =0;\\
    &&\vec \omega = \vec \nabla \times \vec v
\end{eqnarray}
Stated like this, the only way to solve this problem is to derive and solve the equation for the evolution of the statistical distribution of the velocity field, and find an asymptotic trajectory of this distribution, converging to a zero velocity (result of total dissipation of kinetic energy).
The attempt to find such an asymptotic solution was made by Eberhard Hopf \cite{Hopf19}, who derived the evolution equation for the generating functional
\begin{eqnarray}
    Z[J, t] = \VEV{\exp{\int d^3 r \vec J(\vec r) \cdot \vec v(\vec r, t)}}_{NS}; \vec \nabla \cdot \vec J(\vec r)=0;
\end{eqnarray}
Here, averaging goes over all solutions of the \NS{} equations with certain noisy initial conditions. In a real world, the would be the Gibbs distribution
\begin{eqnarray}
    && W[v]\propto \exp{- \int d^3 r \frac{\vec v^2(\vec r)}{2 T_0}};\\
    && Z[J, 0] = \exp{+ T_0 \int d^3 r \frac{\vec J^2(\vec r)}{2}}
\end{eqnarray}
This functional would decrease at a large source only for an extra imaginary factor in $\vec J$. These questions weren't addressed in the Hopf approach as it was used just as a generating functional.
The Hopf equation has the form
\begin{eqnarray}
    \pd{t} Z[J, t] = \hat H\left[J, \ff{J}\right] Z[J, t]
\end{eqnarray}
This equation is too general to solve. It is not using important geometric properties of the velocity field: it is not just the field variable but also the velocity of every particle in the flow.

The velocity field and the vector potential in the electromagnetic field in the Landau gauge look similar. This analogy is purely superficial, as there is no gauge invariance in fluid dynamics. There is, however, another, very powerful invariance, which leads to an exact solution for the statistical distribution described in this paper.

This invariance manifests itself in the loop space formulation of the fluid mechanics\cite{M93, M23PR}. We introduce a loop functional (an abelian Wilson loop)
\begin{eqnarray}
    \Psi[C,t] = \VEV{\exp{\frac{\I \oint d \vec r \cdot \vec v(\vec r, t)}{\nu}}}_{NS}
\end{eqnarray}
The area derivatives of this functional bring down vorticity $\vec \omega = \vec \nabla \times \vec v$, which is an analog of the field strength in QED.
\begin{eqnarray}
  \nu  \fbyf{\Psi[C,t]}{\vec \sigma} = \I \VEV{\vec \omega\exp{\frac{\I \oint d \vec r \cdot \vec v(\vec r, t)}{\nu}} }_{NS}
\end{eqnarray}
The loop equation \cite{M93, M23PR} replaces the Hopf equation
\begin{eqnarray}
    &&  -\I\nu\partial_t \Psi[C,t] =\oint d \vec C(\theta) \cdot \hat L(\theta)\Psi[C,t]
\end{eqnarray}
where $\hat L[\theta]$ is an operator acting on the loop $C(.)$.

The powerful symmetry mentioned above is the \textbf{translation invariance in loop space}. Due to this invariance, the operator $\hat L(\theta)$ involves only functional derivatives $\ff{C(.)}$ but not the loop $C$ by itself. 
\begin{eqnarray}
    \hat L(\theta) = \hat L\left[\ff{C(.)}|\theta\right]; 
\end{eqnarray}

The loop equation becomes the Schrödinger equation in loop space, with the Hamiltonian depending only on the canonical momenta but not on canonical coordinates.
Therefore, this equation is solved by a superposition of plane waves:
\begin{eqnarray}
    &&\Psi[C,t] = \VEV{\exp{ \I \int d \theta \vec C'(\theta)\cdot \vec P(\theta, t)}}_{P};\\
    &&-\I \partial_t \vec P(\theta, t) = \hat L\left[-\I \vec P'(\theta,t);\theta\right]
\end{eqnarray}

Explicit form of this operator (see below) is the \textbf{third degree homogeneous} function of $\vec P(\theta-0), \vec P(\theta+0) $ . This property allows to find an \textbf{exact} decaying solution
\begin{eqnarray}
    \vec P(\theta, t) = \frac{\vec F(\theta)}{\sqrt{2 \nu (t + t_0)}}
\end{eqnarray}
where $\vec F(\theta)$ satisfies \textbf{universal} fixed point equation

This equation becomes an algebraic equation relating $\vec F(\theta+0)$ to $\vec F(\theta-0)$ .
\begin{eqnarray}
     && \left((\Delta \vec{F})^2 -1\right)\vec F   =\Delta \vec{F} \left(\vec{F} \cdot \Delta \vec{F} +\I  \left( \frac{(\vec{F} \cdot \Delta \vec{F})^2}{\Delta \vec{F}^2}- \vec{F}^2\right)\right);\\
     && \vec F \equiv \frac{\vec F(\theta+0) + \vec F(\theta-0)}{2};\\
     &&\Delta \vec{F} \equiv \vec F(\theta+0) - \vec F(\theta-0);
\end{eqnarray}
We are interested in periodic solutions, which implies a certain parameter quantization condition. 

This periodic solution is described in detail below; it has a geometric meaning of a random walk on regular star polygons with quantized vertex angle $\beta = \frac{2 \pi p}{q}$ .
\begin{itemize}
                \item
                Here \(\vec{F}(\theta)\) is a \textbf{\textcolor{purple}{universal fractal curve}}, constructed as the limit \( N \to \infty \) of a regular star polygon \( \{q/p\} \) with vertices:
                $\vec{F}\left(\frac{2 \pi k}{N}\right) = \hat \Omega\cdot\frac{\{\cos(\alpha_k), \sin(\alpha_k), \I \cos(\frac{\beta}{2})\}}{2 \sin(\frac{\beta}{2})},
                $
                where:
               $
                    \beta = \frac{2\pi p}{q}$, $
                   \alpha_k = \beta\sum_{l=0}^k\sigma_l$ , $k = 1,\dots N, \quad N \to \infty $
                \item The parameters $ \hat{\Omega}\in SO(3), \frac{p}{q} \in \mathbb Q, \sigma_k = \pm 1 $ are \textbf{\textcolor{teal}{random}}, making \( \vec{P}(t,\theta) \) a \textbf{\textcolor{purple}{fixed stochastic trajectory}} of MLE.
                \item This solution is equivalent to a \textbf{\textcolor{red}{random walk on the set of regular star polygons}}.
                \item \textbf{\textcolor{green}{Validation:}} This solution has been verified using \Mathematica{} notebooks \cite{DecayTurb23} and rigorously tested in collaboration with mathematicians \cite{DeLellisInprep}.
                \item \textbf{\textcolor{orange}{Significance:}} This framework quantitatively \textbf{\textcolor{purple}{links turbulence to number theory}}, providing a fresh perspective on fluid dynamics.
            \end{itemize}
We suggested calling this set the Euler ensemble, as its probability measure involves the Euler totients (the number of fractions below one with a given denominator). Besides, this ensemble describes the inviscid limit of the NS equation, which is an Euler equation (up to so-called dissipation anomaly. 
This anomaly displays itself here as a WKB limit instead of a classical limit. 

We treat the Euler ensemble as a degenerate fixed point of the NS dynamics, replacing the energy surface (the fixed point of the statistical distribution for the Hamiltonian dynamics).

Assuming uniform covering of the Euler ensemble, we reduce the decaying turbulence to a number theory problem.

This problem can also be viewed as a special case of the string theory, with a discrete target space, given by the set of regular star polygons with unit side and corresponding radius $ R = \frac{1}{2 \sin(\pi p/q)}$. 
\begin{itemize}
        \item The loop functional in the Euler ensemble corresponds to the \textbf{\textcolor{purple}{dual amplitude}} of string theory, defined on a discrete target space $\vec F(\theta)$ with distributed external momentum $\vec Q(\theta,t)= \frac{\vec{C}'(\theta)}{\sqrt{2 \nu (t + t_0)}}$:
        $$
            \Psi[C, t] = \VEV{\exp{ \I \oint d\theta \vec{F}(\theta) \cdot \vec{Q}(\theta,t)}}_{F}$$
        \item Averaging over \textbf{\textcolor{teal}{string target space}} $\vec F_k$ corresponds to summing over star polygons with unit sides and rational angles $\beta = 2\pi \frac{p}{q}$.
        \item Averaging over \textbf{\textcolor{teal}{fermionic/Ising degrees of freedom}} $ \sigma_k$ produces a \textcolor{purple}{random walk} (Brownian motion in the continuum limit) across polygon vertices.
        \item The viscosity enters this string theory as a coupling constant in the denominator of the effective Action. The turbulent limit of $\nu \to 0$ becomes the weak coupling limit, solvable in the WKB approximation.
\end{itemize}

Using number theory methods, we compute statistics of these radii, which leads to the solution of this string theory in the WKB limit.

The WKB limit $\nu \to 0$ corresponds to the strong turbulence in the original NS equation. 

Thus, we have found a dual string theory for the NS equation, with the weak coupling corresponding to decaying turbulence.

There are many technical details, including the WKB computations\cite{migdal2024quantum} (instanton in this string theory) and initial data for the loop functional corresponding to the Gibbs distribution.

However, the essence of the theory is summarized in the above text. Any modern QFT/String theorist can reproduce the rest (perhaps, with some help from number theory colleagues).

Before reading the remaining sections of this paper, I recommend browsing the slides \cite{RutgersSlides2025}, summarizing the essence of the theory, and its experimental verification.

\section{Introduction}

Turbulence remains one of the most challenging open problems in classical physics. 
Traditional treatments rely heavily on phenomenological models and heuristic assumptions, 
such as Kolmogorov's scaling (K41) and multifractal frameworks. Despite extensive 
efforts, the fundamental question of global regularity for the Navier–Stokes (NS) equations, 
recognized as one of the Millennium Prize problems, remains unresolved.

In this paper, we introduce a novel theoretical approach, transforming the NS equations 
into a mathematically more tractable loop-space formulation:
\begin{itemize}
    \item We explicitly reformulate the three-dimensional NS equations in terms of the 
    momentum loop \(\vec{P}(\theta, t)\), converting the original three-dimensional PDE 
    into a solvable one-dimensional nonlinear equation.
    
    \item From this reformulation, we derive the \textit{No Explosion Theorem}, which 
    proves that finite-time singularities do not occur under stochastic 
    initial conditions due to thermal fluctuations of velocity field.
    
    \item We introduce the \textit{Euler ensemble}, an exact analytical solution 
    describing the universal asymptotic state of decaying turbulence, strongly supported 
    by numerical simulations and experimental observations.
\end{itemize}
\section{Loop functional and its general properties}\label{loopFunc}
The loop functional is defined as a phase factor associated with velocity circulation, averaged over the initial distribution $\vec v_0$ of the velocity field
\begin{eqnarray}
  &&  \Psi(\gamma, C) = \VEV{\exp{\frac{\I \gamma}{\nu}\Gamma }}_{\vec v_0};\\
  && \Gamma = \oint_C \vec v(\vec r) \cdot d \vec r;
\end{eqnarray}
We use viscosity $\nu$ as a unit of circulation. Both have the same dimension $L^2/T$ as the Planck's constant $\hbar$.
The viscosity will play the same role in our theory as Planck's constant in quantum mechanics.
The variable $\gamma$ with this definition is dimensionless.

This loop functional is the Fourier transform of the PDF for the circulation over fixed loop $C$
\begin{eqnarray}
    && \Psi(\gamma, C) = \int_{-\8}^\8 d \Gamma P(\Gamma, C) \exp{ \frac{\I \gamma}{\nu} \Gamma };\\
    && P(\Gamma, C) = \int_{-\8}^\8 \frac{d \gamma}{2 \pi\nu} \Psi(\gamma, C) \exp{ -\frac{\I \gamma}{\nu}\Gamma }
\end{eqnarray}
There is an implicit dependence of time, coming from the evolution of the velocity field by the \NS{} equation
 \begin{eqnarray}\label{NSEQ}
    &&\partial_t \vec v = -\nu \nabla \times \vec \omega - \vec v\times \vec \omega - \vec \nabla \left( p + \frac{\vec v^2}{2} \right);\\
    && \nabla \cdot \vec v =0;\\
    &&\vec \omega = \vec \nabla \times \vec v
\end{eqnarray}
We restrict ourselves to three-dimensional Euclidean space, the most interesting case for physics applications. The generalization to arbitrary dimension is straightforward, as discussed in previous papers \cite{M93, M23PR, migdal2023exact}.

In the next sections, we shall study the Cauchy problem for the loop equation \cite{M93, M23PR}, which follows from the \NS{} equation.
Here, we state some general properties of the loop function and various scenarios of its evolution.
\pctPDF{LoopFunctionalEvolution}{Asymptotic trajectories of the time evolution for the loop functional inside the unit circle in the complex plane. The laminar flow is the yellow region on the circle close to $\Psi =1$. Three other flows are 1) hypothetical explosion, 2) decaying turbulence, and 3) special fixed point.}
The first obvious property is that this evolution goes inside the unit circle
\begin{eqnarray}
    && \abs{\Psi(\gamma, C)}\le 1 ; \forall t;
\end{eqnarray}
At a small enough time passed from initial data, $ t < t_c$, turning off the noise would bring us to the usual unique laminar solution of the \NS{} equation, corresponding to the loop functional at the unit circle with a small enough phase.
\begin{eqnarray}
    && \lim_{\sigma\to 0}\abs{\Psi(\gamma, C)} = 1;\forall t < t_c; 
\end{eqnarray}
Here, $\sigma $ denotes the variance of the Gaussian distribution of the velocity field around some smooth initial value.
Generally speaking, we could expect the following fixed points (see Fig.\ref{fig::LoopFunctionalEvolution}) of the time evolution for the loop functional\footnote{We do not count deterministic fixed points corresponding to potential flows. They correspond to isolated points moving on the unit circle.}.

\begin{enumerate}
     \item \textbf{Special solution}. There is a fixed point corresponding to the global random rotation of the fluid (see \cite{M93, M23PR, migdal2023exact}).
    \item \textbf{Decaying Turbulence}. The evolution of loop average reaches some \textbf{fixed trajectory}, independent of initial data, and covers some nontrivial manifold (see \cite{migdal2023exact, migdal2024quantum}). At infinite time, this fixed trajectory leads to zero velocity, corresponding to all the kinetic energy dissipated by viscous effects.
    \item \textbf{Finite-time explosion?} The vorticity could blow up at some finite or infinite point in time, leading to infinite circulation. 
   
\end{enumerate}

In the following sections, we elaborate on each of these possible regimes. The finite-time explosion is proven to be inconsistent and is therefore ruled out.
\section{Loop equation}\label{HopfLoop}
The first step is to write down the loop equation by projecting the Hopf equation \cite{Hopf19} to the loop space.

Before doing that, we have to specify certain boundary conditions which we assume in our fluid dynamics. Namely, we consider infinite space $\mathbb R_3$, with boundary condition of vanishing or constant velocity at infinity.

Vorticity can exist throughout the entire spatial domain, but it vanishes at infinity, as required by our boundary conditions, where velocity gradients diminish. We also assume the absence of internal boundaries, such as solid surfaces that the fluid flows around. However, we allow for the presence of singular regions of vorticity, such as vortex lines and sheets, in the inviscid limit, provided these regions are confined to a finite portion of the volume, consistent with our boundary conditions.

At finite viscosity, these singular regions may acquire a finite thickness proportional to $\sqrt{\nu}$, leading to anomalous dissipation in turbulent flows.

An alternative, simpler mechanism for anomalous dissipation arises when velocity increases as viscosity vanishes, even in the absence of singular vortex structures like Burgers vortices. In this scenario, the turbulent energy grows as viscosity decreases, while the spatial distribution of vorticity remains homogeneous and non-singular.

As we will demonstrate below, this is the mechanism realized in our solution for decaying turbulence.

The computations leading to the loop equation were performed in the old papers \cite{M93, M23PR}. For the reader's convenience, we repeat them here using another language, hopefully more clear for mathematicians.

The straightforward time derivative of the loop functional, assuming the constant loop $C$ and using time derivative \eqref{NSEQ} of the velocity field in the circulation, yields
\begin{eqnarray}\label{Tder}
  &&  \partial_t \Psi(\gamma,\vec C) =\VEV{\frac{\I \gamma}{\nu}\oint d \vec C(\theta) \cdot \vec L(\vec C(\theta))\exp{ \frac{\I \gamma}{\nu}\Gamma(\vec v, \vec C)}}_{sol};\\
  && \vec L(\vec r) =  - \nu \vec \nabla \times \vec \omega(\vec r) + \vec \omega(\vec r)\times\vec v(\vec r)
\end{eqnarray}
The averaging $\VEV{}_{sol}$  goes, as before, over time-dependent NS solutions $\vec v(\vec r)$ with a given set of initial values $\vec v_0(\vec r)$. It is implied that a probability measure (see examples below) is supplied for this set of initial velocity fields. The phase factor of circulation is averaged over initial data using this measure.

The gradient terms $\vec \nabla \left( p + \frac{\vec v^2}{2} \right)$ in \eqref{NSEQ} dropped in the time derivative of the circulation as the integral of a gradient of some single-valued function of coordinate $H(\vec r) = p(\vec r) +  \frac{\vec v^2(\vec r)}{2} $ around the closed loop: $$ \oint d \vec C(\theta) \cdot \vec \nabla H(\vec C(\theta)) = \oint d H(\vec C(\theta)) =0.$$
The velocity field $\vec v $ is a solution of the Poisson equation, relating it to vorticity by incompressibility condition
\begin{eqnarray}\label{VOM}
&&\vec v(\vec r) = \frac{-1}{\vec \nabla^2}\vec\nabla\times \vec \omega(\vec r)
\end{eqnarray}
This representation leaves vorticity as the main unknown variable in the time derivative of the loop functional.

To find the loop equation, we must replace the vorticity and its gradients with certain operators acting on the loop independently of the vorticity and velocity fields.
As a result of such transformation, the vector function $\vec L(\vec C(\theta))$ will be replaced by a certain operator $\hat L(\theta)$ in loop space acting on $\Psi(\gamma,\vec C)$
\begin{eqnarray}\label{Lop}
    && \hat L(\theta) \exp{ \frac{\I \gamma}{\nu}\Gamma(\vec v, \vec C)} =\left(  - \nu \hat \nabla(\theta)  \times \hat\omega(\theta) + \hat \omega(\theta) \times \hat v(\theta)\right)\exp{ \frac{\I \gamma}{\nu}\Gamma(\vec v, \vec C)};\\
    \label{lopEq}
    &&  \partial_t \Psi(\gamma,\vec C) =\frac{\I \gamma}{\nu}\oint d \vec C(\theta) \cdot \hat L(\theta)\Psi(\gamma,\vec C)
\end{eqnarray}
This operator $\hat L(\theta)$ depends on certain operators $\hat \nabla(\theta),  \hat\omega(\theta),  \hat v(\theta)$ instead of on the dynamical variables $\vec v(\vec r), \vec \omega(\vec r)$, therefore it can be taken out of the averaging over trajectories starting from various initial data $\vec v_0(\vec r) $ so that this operator acts on the loop average $\Psi(\gamma,\vec C)$.
Such is the plan of the proof of the loop equation.
We define the loop operators and follow this plan in the next section.

\section{The definitions of the loop operators and the proof of the loop equation}\label{LoopDef}
The operators in the loop equation were introduced in\cite{M93} and explained at length in my review paper\cite{M23PR}.

In this paper, we do not assume any knowledge of the previous work; instead, we derive the loop operators from scratch using a simpler method.

First, we approximate the smooth loop $C(\theta)$ by a polygon with $N$ vertices $\vec C_k = \vec C(2 \pi k/N)$ in the limit  $ N \to \8$. We postpone this local limit until we solve the discrete loop equation. This limit will define the continuum theory in the same way as in the QFT; the functional integral is discretized using a lattice with the lattice spacing going to zero at the end of the calculation. In this limit, the theory's parameters vary with the lattice spacing to provide a finite result for the physical observables.

The first observation is that with smooth velocity and vorticity fields, the discrete circulation around the polygon $\vec C_k, k =0,\dots N-1, \vec C_N = \vec C_0$ converges to the circulation around the smooth loop
\begin{eqnarray}
    &&\Gamma \equiv \sum_k \int_{C_k}^{C_{k+1}} d \vec r \cdot \vec v(\vec r, t) \to \oint d \vec C(\theta) \cdot \vec v(\vec C(\theta));
\end{eqnarray}
The line integral $$\int_{C_k}^{C_{k+1}} d \vec r$$ becomes the continuous integral along the loop $C$ in the limit $N\to \infty$ when the sides $\Delta \vec C_k = \vec C_{k+1} - \vec C_k$ of the polygon tend to zero.

The next property is also easy to prove using the Stokes theorem for a small triangle $\left(\vec C_{k-1}, \vec C_k, \vec C_{k+1}\right)$
\begin{eqnarray}
&& \vec \nabla_k \equiv \pd{\vec C_k};\\
    && \vec \nabla_k \Gamma \propto \left( \Delta \vec C_k + \Delta \vec C_{k-1} \right)\times \vec \omega(\vec C_k)  \to 0
\end{eqnarray}

To prove this we rewrite the complete circulation as a sum of two:
\begin{eqnarray}
    &&\Gamma = \Gamma' + \Gamma_\triangle;
\end{eqnarray}
The circulation $\Gamma'$ corresponds to the loop with one vertex $C_k$ removed, resulting in a side connecting $C_{k-1}$ directly to $C_{k+1}$. The other term $\Gamma_\triangle$ corresponds to circulation around the triangle made of three vertices $\triangle = \left(\vec C_{k-1}, \vec C_k, \vec C_{k+1}\right)$.  When these two circulations are added, the line integral $\int_{C_{k-1}}^{C_{k+1}} d \vec r \cdot \vec v(\vec r, t)$ cancels between the two, so we are left with original circulation over complete polygon $C$.

The derivatives by $\vec \nabla_k \equiv \pd{\vec C_k}$ only involve the triangle circulation $\Gamma_\triangle$ and are easily estimated at $N \to \infty$.

This first derivative vanishes at $N \to \8$ as $\Delta \vec C_k \sim \mathcal O(1/N)$.

The second derivative, however, stays finite. We prefer to use another set of variables
\begin{eqnarray}
   && \vec s_k = \Delta \vec C_k; \\
   &&\vec \eta_k = \pd{\vec s_k};\\
   && \vec \nabla_k = -\Delta\vec \eta_{k-1};
\end{eqnarray}
The last relation follows from the chain rule
\begin{eqnarray}
    &&\vec \nabla_k = \frac{\partial{\vec s_k}}{\partial{\vec C_k} }\cdot \vec\eta_k +  \frac{\partial{\vec s_{k-1}}}{\partial{\vec C_k}}\cdot \vec \eta_{k-1} =\vec \eta_{k-1} - \vec \eta_k  = - \Delta \vec \eta_{k-1}
\end{eqnarray}
The vorticity can be represented as
\begin{eqnarray}
    && \vec \eta_{k-}\times \vec \nabla_{k} \Gamma \to \vec \omega(\vec C_k) + \mathcal O(1/N);\\
    && \vec\eta_{k-} \equiv \frac{\vec \eta_{k}+ \vec \eta_{k-1}}{2};
\end{eqnarray}

The contour $C$ becomes an open line when we move all $\vec s_k$ independently, without restricting $\sum \vec s_k =0$. However, the contribution to the time derivative of circulation from the extra gap between the endpoints $\Delta \partial_t \Gamma \propto H(\vec C_N) - H(\vec C_0) $ where $H(\vec r) = p(\vec r) + \frac{\vec v^2(\vec r)}{2}$ is the enthalpy, which is supposed to be differentiable. Thus, this error term goes to zero as we reinstate the closure condition $\sum \vec s_k =\vec C_N - \vec C_0 =0$.

Finally, the velocity field at the vertex $\vec v(\vec C_k)$ can be related to the vorticity through the Biot-Savart law
\begin{eqnarray}
   && \vec v(\vec C_k) \exp{\frac{\I \gamma\Gamma} {\nu}}= -1/(\vec \nabla_k^2) \vec \nabla_k \times \vec \omega(\vec C_k)  \exp{\frac{\I \gamma\Gamma} {\nu}};
\end{eqnarray}
Let us verify this relation using the Biot-Savart integral formula for the inverse Laplace operator
\begin{eqnarray}
    &&\vec v(\vec C_k)\exp{\I \Gamma}=\frac{1}{4 \pi} \int d^3 r \frac{\vec r \times \vec\omega(\vec C_k+ \vec r)}{|\vec r|^3}\exp{\I \tilde \Gamma(\vec r)} +  \mathcal O(1/N);\\
    && \tilde \Gamma(\vec r) = \left. \Gamma\right|_{\vec C_k \Rightarrow \vec C_k + \vec r}
\end{eqnarray}
At first glance, the loop in the new circulation $\tilde \Gamma(\vec r)$ involves two long "wires": $(\vec C_{k-1}, \vec C_k+ \vec r)$ and $(\vec C_k + \vec r, \vec C_{k+1})$. 

However, in the local limit, when the distance $|\vec C_{k+1}- \vec C_{k-1}| = \mathcal O(1/N) $, these two wires have zero area inside the arising thin triangle, so they effectively cancel in virtue of the Stokes theorem, assuming the Biot-Savart integral converges.
\begin{eqnarray}
    \tilde \Gamma(\vec r) \to \tilde \Gamma(0) = \Gamma
\end{eqnarray}
This produces the desired result in the Biot-Savart formula.

The convergence of the \BS{} integral follows from our boundary conditions, assuming no vorticity at infinity or even stronger requirement of finite support of vorticity.
The phase factor $\exp{\I \tilde \Gamma(r)} $ does not influence the absolute convergence, so it can be set to $\exp{\frac{\I \gamma\Gamma} {\nu}}$ for that purpose and taken out of the integral, returning us to the convergence of the ordinary \BS{} integral.

Therefore, with $\mathcal O(1/N)$  accuracy, we can replace the right side of the \eqref{Tder} by its discrete version with operators involving $\vec \nabla_k$
\begin{subequations}\label{LoopEq}
    \begin{eqnarray}
    &&\pd{t}\VEV{\exp{\frac{\I \gamma\Gamma} {\nu}}}= \frac{\I \gamma}{\nu}\sum_k \Delta \vec C_k \cdot \hat L_k \VEV{\exp{ \frac{\imath  \gamma\Gamma}{\nu}}} + \mathcal O(1/N);\\
    && \hat L_k = - \nu \vec \nabla_k \times \hat \omega_k + \hat \omega_k \times \hat v_k  ;\\
    && \hat v_k = -1/(\vec \nabla_k^2) \vec \nabla_k \times \hat \omega_k;\\
    && \hat \omega_k = \frac{\I\gamma}{\nu}\vec \eta_{k-} \times \vec \nabla_{k};
\end{eqnarray}
\end{subequations}
We restrict ourselves to the velocity vanishing at infinity and no internal boundaries in the physical domain.
With this boundary condition, the harmonic potential is zero, and there is no zero mode to add to the inverse Laplace operator.

\textbf{In the rest of the paper, we shall use the language of the continuum theory, implying the limit $N\to \8$ of a polygon  $\vec C$ with $N$ sides. While the lengths of the sides of $\vec C$ vanish in the local limit $N \to \8$, the sides of $\vec P$ polygon are not at our disposal, so they may stay finite (this will happen in the decaying turbulence below).}
\section{Schrödinger equation in loop space}\label{HPsiLoop}
Before we investigate the solutions of the loop equation, let us consider its physical and mathematical meaning and its relation to the geometry of the incompressible flow.

By definition, the loop functional $\Psi(\gamma,\vec C)$ is a superposition of the phase factors $\exp{ \frac{\I \gamma}{\nu}\Gamma(\vec v, \vec C)}$ with the circulation $\Gamma$ of a particular solution $\vec v(\vec r, t)$ of the \NS{} equation. These solutions have initial values $\vec v(\vec r, 0) = \vec v_0(\vec r)$, distributed by some distribution $P[\vec v]$ which we assume Gaussian with the mean given by some smooth initial field and some coordinate-independent variance $\sigma$.

In the turbulent scenario, the \NS{} trajectories initiated from a narrow vicinity of some smooth velocity field eventually expand and cover some attractor, slowly varying with time and asymptotically converging to $\vec v =0, \Psi =1$.

The alternative smooth solution of the \NS{} equation, sought after in numerous mathematical papers, would correspond to these trajectories staying close and converging to a single trajectory in the limit $\sigma\to 0$. This single trajectory would go along the unit circle, bounding our disk.

With this generalization of a definition of the Cauchy problem for the \NS{} equation, we can address the existence of smooth, explosive, or stochastic (i.e., turbulent) solutions within the loop equation's framework.

The transformation from the \NS{} equation to the loop equation is similar to that from the Newton equation of the particle in random media to the diffusion equation. We add dimension to the problem, switching to the probability distribution in $\mathbb{R}_d$, after which the particle's infinitesimal time steps translate into probability derivatives by coordinates.

There are two essential differences, however. Our loop space is not just higher-dimensional; in the local limit $N \to \8$, it is infinite-dimensional.
The second difference is that in addition to diffusion terms $ \nu \hat \nabla \times \hat \omega$, we have nonlocal advection terms $\hat v \times \hat \omega $ affecting the evolution of the distribution in loop space.

Our definition of the loop functional already by construction has superficial similarities with quantum mechanics. We are summing phase factor over a manifold of solutions of the \NS{} equations. The circulation plays the role of classical Action, and viscosity plays the role of Planck's constant.

This analogy becomes a complete equivalence when the time derivative of the loop functional is represented as an operator $\vec L(
\vec C(\theta)) \Rightarrow \hat L(\theta)$ in the loop space acting on this functional.

Now we have quantum mechanics in loop space, with the Hamiltonian $$\hat H \propto \oint d \vec C(\theta) \cdot \hat L(\theta).$$ 
The operator $\hat L(\theta)$ depends of functional derivatives $\ff{\vec C(\theta)}$, as was determined, and discussed in previous works \cite{M93, M23PR, migdal2023exact}. Our polygonal approximation has no functional derivatives, just ordinary derivatives $\vec \nabla_k = \pd{\vec C_k}, \vec \eta_k = \pd{\Delta \vec C_k}$.  Thus, our quantum-mechanical system has $3 N$ continuum degrees of freedom $\vec C_1, \dots \vec C_N$ with periodicity constraint $ \vec C_0 = \vec C_N$.

This Hamiltonian is not Hermitian, which reflects the dissipation phenomena. The time reversal leads to complex conjugation of the loop functional, a nontrivial transformation, as there is no symmetry for the reflection of velocity field $\vec v(\vec r,t) \to -\vec v(\vec r, t)$.
 
The loop in our theory is a periodic function of the angular variable $\theta$. Geometrically, this is a map of the unit circle into Euclidean space  $\mathbb{S}_1 \mapsto \mathbb R_d$. 
In particular, there could be several smaller periods, in which case this loop becomes a set of several closed loops connected by backtracking wires like in Fig.\ref{fig::VPsi}.
\pctPDF{VPsi}{The "hairpin" loop $C$ used in defining the pair correlation of vorticity. The little circles are the loop variations needed to bring down vorticity at two points in space. The backtracking contribution to the circulation cancels at vanishing separation between these parallel lines.}
Also, this map could have an arbitrary winding number $n$ corresponding to the same geometric loop in $\mathbb{R}_d$ traversed $n $ times.

The linearity of the loop equation is the most important property of this transformation from \NS{} equation to the quantum mechanics in loop space.

This transformation exemplifies how the nonlinear PDE reduces to the linear problem projected from high dimensional space. In our case, this space is the loop space, which is infinite-dimensional. 

As a consequence of linearity, the generic solution of the loop equation is a superposition of particular solutions with various parameters. More generally, this is an integral (or sum, in discrete case) over the space $\mathcal S$ of solutions of the loop equation.

In the case of the Cauchy problem in loop space, the measure for this integration over space $\mathcal S$ is determined by the initial distribution of the velocity field.
The asymptotic turbulent solution \cite{migdal2023exact} uniformly covers the Euler ensemble, like the microcanonical distribution in Newton's mechanics covers the energy surface.

This turbulent solution does not solve a Cauchy problem; it rather solves the loop equation with the boundary condition at infinite time $ \Psi_{t= \8} =1 $.

In the next section, we simplify the loop equation using Fourier space; this will be the foundation for the subsequent analysis.
\section{Momentum Loop Equation}\label{MomLoopEq}

The loop operator, $\hat L$ in \eqref{LoopEq}, dramatically simplifies in the functional Fourier space, which we call momentum loop space. In our discrete approximation, the momentum loop will also be a polygon with $N$ sides.

The origin of this simplification is the lack of direct dependence of the loop operator $\hat L(\theta)$ on the loop $C$ itself. Only derivatives $\vec \nabla_k, \vec \eta_k$ enter this operator.

From the point of view of quantum mechanics in loop space, our Hamiltonian only depends on the canonical momenta but not on the canonical coordinates. This property is exact as long as we do not add external forces.

This remarkable symmetry property (translational invariance in loop space) allows us to look for the "superposition of plane waves" Ansatz:
\begin{subequations}
    \label{PsiP}
\begin{eqnarray}
    &&\Psi(\gamma,C|t) =  \VEV{\psi_p(t)}_{init};\\
    &&\psi_p(t) =  \exp{\frac{\I \gamma}{\nu}\sum_k \Delta \vec C_k \cdot \vec P_{k}(t)}
\end{eqnarray}
\end{subequations}
Here the averaging $\VEV{\dots}_{init}$ goes over all trajectories $P_{k}(t)$ passing through random initial data $\vec P_k(0)$ distributed with the corresponding probability to reproduce initial value $\Psi(\gamma,C|0)$. We discuss this initial distribution in the next sections.

The operators $\vec \nabla_k, \vec \eta_k$ become ordinary vectors when applied to $\psi_p$ in \eqref{PsiP}:
\begin{subequations}\label{nablaEta}
  \begin{eqnarray}
    &&\vec \nabla_k \psi_p = -\frac{\I \gamma}{\nu}\Delta \vec P_{k-1} \psi_p;\\
    &&\vec \eta_{k-} \psi_p = \frac{\I \gamma}{\nu}\vec P_{k-} \psi_p;\\
    && \vec P_{k-}  \equiv \frac{\vec P_k + \vec P_{k-1}}{2};\\
    \label{omegaK}
    && \hat \omega_k \propto \frac{\I \gamma}{\nu}\vec P_{k-} \times\Delta \vec P_k
\end{eqnarray}  
\end{subequations}
The velocity circulation can be rewritten up to $\mathcal O(1/N)$ corrections as a symmetric sum
\begin{eqnarray}\label{sumKminus}
   &&\sum_k \Delta \vec C_k \cdot \vec P_{k}(t) + \mathcal O(1/N) =  \sum_k \frac{\Delta \vec C_k + \Delta C_{k+1}}{2} \cdot \vec P_{k}(t) =\sum_k \Delta \vec C_k \cdot \vec P_{k-}(t) 
\end{eqnarray}
We did not assume here anything about the continuity of $\vec P_k$; we only assumed that $|\Delta C_{k+1} - \Delta \vec C_k|  \ll |\Delta \vec C_k| $ which is true for smooth loop.

The discrete loop equation \eqref{LoopEq} with our Anzatz \eqref{PsiP} after some algebraic transformations using the above identities \eqref{nablaEta}, \eqref{sumKminus} reduces to
the following momentum loop equation (MLE)\cite{M23PR,migdal2023exact}
\begin{subequations}\label{PloopEq}
   \begin{eqnarray}
  && \nu\partial_t \vec{P} =  - \gamma^2(\Delta \vec{P})^2 \vec{P}  + \Delta \vec{P} \left(\gamma^2 \vec{P} \cdot \Delta \vec{P} +\I \gamma \left( \frac{(\vec{P} \cdot \Delta \vec{P})^2}{\Delta \vec{P}^2}- \vec{P}^2\right)\right);\\
  && \Delta \vec P \equiv \vec P_{k} - \vec P_{k-1};\\
  && \vec P \equiv \vec P_{k-}
\end{eqnarray}
\end{subequations}
 
In the local limit $N \to \8$, the momentum loop will have a discontinuity $\Delta \vec{P}(\theta)$ at every parameter $0<\theta \le 2 \pi$, making it a fractal curve in complex space $\mathbb{C}_d$.  Such a curve can only be defined using a limit like a polygonal approximation or a Fourier expansion of a periodic function of $\theta$ with slowly decreasing Fourier coefficients. 

You can regard this curve as a periodic random process hopping around the circle (more about this process below, in the context of the decaying turbulence).

The details can be found in \cite{M23PR, migdal2023exact}. We will skip the arguments $t, k$ in these loop equations, as there is no explicit dependence of these equations on either of these parameters.

\section{Uniform constant rotation and momentum loop}\label{RandRot}
The loop equation has several unusual features, especially the discontinuities of the momentum loop. These discontinuities have a physical meaning related to vorticity.

It is best understood by studying an exact fixed point of the loop equation: the global constant rotation. We set $\gamma = \nu$ for simplicity in this example.
\begin{eqnarray}\label{vrot}
    && \val(\vec r|\phi) = \phi_{\alpha\beta} \rbe;\\
    && \phi_{\alpha\beta} = - \phi_{\beta\alpha};\\
    && \Psi[C] = \exp{\I  \phi_{\alpha\beta} \oint d C_\alpha(\theta) C_\beta(\theta)};
\end{eqnarray}
We present two implementations of the momentum loop for this simple model: one using an infinite Fourier expansion and another using the limit of polygonal approximation of the loop. This will allow us better understand the origin and the meaning of these discontinuities.

\subsection{Infinite Fourier series}
Here is the implementation of the momentum loop by an infinite Fourier series.
\begin{eqnarray}
\label{Pexp}
&&\Psi_0[C]=  \VEV{\exp{
	   \I \oint d \vec{C}(\theta) \cdot \vec{P}(\theta)}}_{P};\\
  &&  P_\alpha(\theta)= \sum_{\text{odd }n=1}^\infty P_{\alpha,n} e^{\I n \theta} + \bar{P}_{\alpha,n} e^{-\I n \theta};\\
  && P_{\alpha,n} = \mathcal N(0,1) ;\\
  && \bar{P}_{\alpha,n} =\frac{4 }{\pi n } \phi_{\alpha \beta}P_{\beta,n} ;\\
  && \phi_{ \alpha\beta} = - \phi_{\beta\alpha};
\end{eqnarray}
The covariance matrix components are (for odd $n,l$)
\begin{eqnarray}
  &&\VEV{P_{\alpha,n} P_{\beta,l}}
= \frac{4 }{n } \delta_{n l} \phi_{\alpha \beta};\\
\label{Pcorr}
&&\VEV{ P_{\alpha}(\theta) P_{\beta}(\theta')}_P =
2\imath  \phi_{\alpha \beta} \sign(\theta'-\theta);
\end{eqnarray}
The loop functional is obtained after averaging over Gaussian random variables $ P_{\alpha,n}, \phi_{\alpha\beta}$. The loop function can be computed without an explicit Functional Fourier transform using the well-known properties of the Gaussian expectation value of the exponential.
\begin{eqnarray}
   && \VEV{\exp{
	  \I \oint d \vec{C}(\theta) \cdot\vec{P}(\theta)}}_{P} \nonumber\\
    &&\propto \exp{
	   -\oh\oint d C_\alpha(\theta) \oint d C_\beta(\theta')\VEV{P_\alpha(\theta) P_\beta(\theta')} }\nonumber\\
    &&\propto \exp{
	-\imath/2 \phi_{\alpha \beta}\oint d C_\alpha(\theta) \oint d C_\beta (\theta') \sign(\theta-\theta')} =\nonumber\\
    && \exp{
	   -\imath \phi_{\alpha \beta}\Sigma _{\alpha\beta}};\\
    && \Sigma_{ \alpha\beta} = \oint d C_\alpha(\theta) C_\beta (\theta)
\end{eqnarray}

With this representation, it is obvious why the circulation does not depend on time; the vorticity is a global constant $\phi_{\alpha\beta}$ which does not depend on time nor $\vec r$. Simple tensor algebra in the time derivative of circulation leads to the term
\begin{eqnarray}
   && \oint d C_\alpha(\theta)   L_\alpha(\theta) \propto \phi_{\alpha\beta}\phi_{\beta\gamma} \Sigma_{\gamma\alpha} =0;\\
   && \Sigma_{\alpha\gamma} = -\Sigma_{\gamma\alpha} =\oint_C r_\alpha d r_\gamma
\end{eqnarray}
The tensor trace vanishes by symmetry $\gamma \leftrightarrow \alpha$, changing the sign of $\Sigma_{\alpha\gamma}$. This solution is a consequence of the rotational symmetry of the \NS{} equation.

Verification of the MLE is more tedious because, this time, the velocity in \eqref{vrot} explicitly depends on the coordinate. This will become $ \phi_{\alpha\beta} C_\beta(\theta)$ in the equation, which means that the operator $\hat L(\theta)$ depends both on $\vec C, \vec P$. 
Still, for the proof, it suffices to know the momentum loop \eqref{PsiP} and the corresponding velocity field \eqref{vrot}, solving the \NS{} equation for arbitrary constant $\phi$.

Though this special solution does not describe isotropic turbulence, it helps understand the mathematical properties of the loop technology. 

In particular, it shows the significance of the discontinuities of the momentum loop $\vec{P}(\theta)$, as it is manifest in the correlation function\eqref{Pcorr}. These discontinuities are necessary for vorticity; they result from the divergence of the Fourier series in \eqref{Pexp}. 

The mean vorticity at the circle is proportional to $\phi_{\alpha \beta}$ independently of $\theta$
\begin{eqnarray}
    &&\VEV{\omega_{\alpha\beta}(\vec C(\theta))} \propto \VEV{ P_{\alpha}(\theta)
    \Delta P_{\beta}(\theta)} \propto   \phi_{\alpha \beta} ;
\end{eqnarray}
\subsection{Polygonal approximation}
The second implementation is more aligned with the methods we use in the MLE. We approximate the loop $C$ as a polygon with vertices $\vec C_k$ equidistant on a parametric circle.
\begin{eqnarray}\label{PsiDisc}
    && \I\int_C  \vec C(\theta) \cdot \phi \cdot  d\vec C(\theta) \approx \I\sum_{k=0}^{N-1}\vec C(k) \cdot \phi \cdot \Delta C(k) =\nonumber\\
    && \frac{\I}{2}\sum_{k,l=0}^{N-1} \Delta \vec C(k) \cdot \phi \cdot \Delta \vec C(l) \sign(k-l);\\
    && \Delta \vec C(k) = \vec C(k+1) - \vec C(k);
\end{eqnarray}
Our next task is to represent the loop functional as a Gaussian average over the momentum loop $\vec P= \{\vec P_0, \dots \vec P_k$\} with symmetric covariance matrix
\begin{eqnarray}
    \VEV{P^\alpha_k P^\beta_l} =  \I \phi_{\alpha\beta} \sign(k-l)
\end{eqnarray}
This representation will involve the following discrete Fourier transform with Gaussian coefficients
\begin{eqnarray}
  &&  P_k^\alpha = \sum_{n=1}^{N} \xi^\alpha_n \exp{\I k \omega_n} + \bar\xi^\alpha_n \exp{-\I k \omega_n};\\
  &&\omega_n = \pi (2 n+1)/N;\\
  && \VEV{\xi^\alpha_n \bar\xi^\beta_m} = \I\phi_{\alpha\beta} \delta_{n , m} U(n);\\
  && U(n) =\frac{2}{N}\sum_{k= -N}^{N}\sign(k) \sin\left(k \omega_n\right) 
\end{eqnarray}
This  discrete Fourier transform for $U(n)$ reduces to a finite geometric series with the following result
\begin{eqnarray}
    U(n) = \frac{2}{ N \tan\left(\frac{\omega_n}{2}\right)};
\end{eqnarray}
Note that this $\vec P_k$ is antiperiodic: it changes the sign when the index goes around the loop. This, however, keeps the solution simply periodic in $C$ space, as only an even number of $\vec P$ variables have non-vanishing expectation values in this particular example.

This example shows both the discontinuities' meaning and the momentum loop's approximation by a polygon.
In this example, the continuum limit $N \to \infty$ can be taken for the loop functional, but not at the level of the Fourier series for the momentum loop. 

The formal limit $N \to \infty$ exists for $U(n)$ at fixed $n$
\begin{eqnarray}
    U(n)_{N \to \8} \to \frac{4}{\pi(2 n +1)}
\end{eqnarray}
and matches the continuum theory,
but the oscillating sum of Gaussian random variables does not converge to any normal function; rather, this is a stochastic process on $\bS_1$ with convergent expectation values.

% \section{About References and their Formats}
% References are to be listed in the order cited in the text in Arabic
% numerals. They should be listed according to the style shown in the
% References. Typeset references in 9 pt roman.

% It is recommended that each cited publication is labeled with a unique reference number. Format that contains multiple/grouped citations, e.g. ``A. David, {\it Int. J. Phys.} {\bf 18}, 11150 (2000); {\bf 18}, 22350 (2003); J. John, {\it ibid.} {\bf 6}, 11204 (2010)'', is not encouraged.

% References in the text can be typed in superscripts,
% e.g.: ``$\ldots$ have proven\cite{autbk,edbk,rvo} that
% this equation $\ldots$'' or after punctuation marks:
% ``$\ldots$ in the statement.\cite{rvo}'' This is
% done using LaTeX command: ``\verb|\cite{name}|''.

% When the reference forms part of the sentence, it should not
% be typed in superscripts, e.g.: ``One can show from
% Ref.~\refcite{autbk} that $\ldots$'', ``See
% Refs.~\citen{jpap,colla,autbk}, \refcite{rvo}
% and \refcite{pro} for more details.''
% This is done using the LaTeX
% command: ``Ref.~\verb|\citen{name}|''.

\section{Cauchy problem and its solution}\label{Cauchy}

The Cauchy problem, notoriously difficult for nonlinear PDE, can be solved analytically for the loop equation. The hard part of the problem is now hidden in the limit $\sigma \to 0$, bringing us back to the \NS{} equation with smooth initial data.

Let us describe this solution. Assuming the MLE \eqref{PloopEq} satisfied, we have certain conditions for the initial data $\vec P_0(\theta) = \vec P(\theta, 0)$. This data is distributed with some distribution $W[P]$ to be determined from the equation
\begin{eqnarray}
    &&\Psi_0(\gamma, C) =\Psi(\gamma, C)_{t=0} =\int [D P_0] W[\vec P_0]\exp{
	   \frac{-\I \gamma}{\nu}\oint \vec{C} \cdot d \vec{P}_0};
\end{eqnarray}
This path integral is nothing but a functional Fourier transform, which can be inverted as follows
\begin{eqnarray}\label{InvFourier}
    &&W[P_0] = \int [D C]  \Psi_0(\gamma, C) \exp{
	   \frac{\I \gamma}{\nu}\oint  \vec{C} \cdot d\vec{P}_0};
\end{eqnarray}
The definition of the parametric-invariant functional measure in this Fourier integral was discussed in detail in the old work \cite{M93, M23PR}. The periodic vector functions $\vec C(\theta), \vec P(\theta)$ are represented by the Fourier series, after which the measure becomes a limit of the multiple integrals over all the Fourier coefficients. 
As an alternative, one may replace these loops with polygons with $ N\to\8$ sides and define the measure as a product of integrals over the positions of the vertices of these polygons.

The explicit formulas for the Fourier measure, proof of its parametric invariance, and some computations of the Functional Fourier Transform can be found in \cite{M23PR}, section 7.10 (Initial data).

In the previous section, we completely solved the Cauchy problem for an interesting example -- the exact fixed point of the loop equation corresponding to a global random rotation. The resulting momentum loop was a singular periodic function of $\theta$ with a Gaussian distribution of parameters (vertices or Fourier coefficients).

In a physically justified case of Gaussian thermal noise $\vec \xi(\vec r)$ added to the initial velocity field $\vec v_0(\vec r)$, we can advance solving the Cauchy problem for a generic initial velocity field.

The potential component of $\vec \xi(\vec r)$, proportional to the gradient of some scalar, drops from the loop functional. Therefore we could always add such a term to make $\vec \nabla \cdot \vec \xi(\vec r) =0$, preserving incompressibility. Though such a constraint noise is not quite physical, it is equivalent to a physical Gaussian random noise inside the velocity circulation we study.

Averaging the initial loop functional over Gaussian noise, we find 
\begin{eqnarray}\label{NoisyPsi}
    &&\Psi_0(\gamma, C) = \exp{\frac{\I \gamma}{\nu}\oint_C d \vec r \cdot \vec v_0(\vec r)- \frac{\gamma^2}{2 \nu^2} \oint _C \oint _C  d \vec r \cdot \VEV{\vec \xi(\vec r)\otimes \vec \xi(\vec r') } \cdot d \vec r'}
\end{eqnarray}
This Gaussian noise is correlated at small distances $r_0$, related to the molecular structure of the fluid, which leads to the following estimate
\begin{eqnarray}
    &&\oint _C \oint _C  d \vec r \cdot \VEV{\vec \xi(\vec r)\otimes \vec \xi(\vec r') } \cdot d \vec r' \to \frac{|C| \sigma^2}{r_0^2} ;\\
    && |C| =\oint \abs{d \vec C(\theta)} = \int_0^1 d \theta | \vec C'(\theta)|
\end{eqnarray}
The steps leading to this estimate are as follows (assuming natural length parametrization with $\vec C'(s)^2 =1$):
\begin{eqnarray}
    &&\VEV{\xi_i(\vec r)  \xi_j(\vec r') } = \left(\delta_{i j} - \frac{\nabla_i \nabla_j}{\nabla^2}\right) g\left((\vec r-\vec r')^2\right);\\
    &&\oint_C d r_i\oint_C d'_j \left(\delta_{i j} - \frac{\nabla_i \nabla_j}{\nabla^2}\right) g\left((\vec r-\vec r')^2\right) =\nonumber\\
    &&\oint_C d r_i\oint_C dr'_i g\left((\vec r-\vec r')^2\right) =\oint d s \oint d s' \vec C'(s) \cdot \vec C'(s') g\left((\vec C(s) - \vec C(s'))^2\right)
\end{eqnarray}
Expanding $\vec C(s') \to \vec C(s) + (s'-s) \vec C'(s) , \vec C'(s') \to \vec C'(s)$ which is valid in small vicinity $s' \to s$ assuming the function $g(\vec \rho)$ is supported in a small vicinity $|\vec \rho| \sim r_0 \to 0$ we find
\begin{eqnarray}
    &&\oint_C d r_i\oint_C dr'_i g\left((\vec r-\vec r')^2\right) \to  m_0 |C|;\\
    && m_0 = \int_{-\8}^{\8} d x g(x^2) \propto \frac{\sigma^2}{r_0^2}
\end{eqnarray}
The last estimate follows from dimensional counting and normalization of the variance.
This estimate yields the following initial distribution of the random loop $\vec P_0(\theta)$
\begin{eqnarray}
    &&W[P_0] = \int [D C] \exp{- m_0 |C| +
	   \frac{\I \gamma}{\nu}\oint d \vec{C} \cdot \left(\vec v_0(\vec C(\theta)) - \vec{P}_0(\theta)\right)};\\
    && m_0 = \frac{\gamma^2 \sigma^2}{2 \nu^2  r_0^2}
\end{eqnarray}
This path integral is equivalent to that of a relativistic Klein-Gordon particle in the presence of the electromagnetic field with vector potential $\vec v_0(\vec r)$ in three Euclidean dimensions. The unusual feature is the distributed momentum $\vec P_0(\theta)$ along the loop.

Let us compute this path integral for the uniform initial velocity $\vec v_0(\vec r) = \const{}$.
In this case, the circulation is zero, so we are left with the Fourier transform of the exponential of the loop's length.

This path integral is equivalent \cite{PolyakovGFS} to the Klein-Gorgon propagator of the free massive particle with the mass $m_0$ up to renormalization coming from the short-range fluctuations of the path.

The constant velocity $v_0(\vec C(\theta))$ drops from the closed-loop integral, which brings the exponential to the ordinary momentum term in the Action $\oint d \vec C(\theta) \cdot \vec P_0(\theta)$.
This path integral is computed by fixing the gauge for the parametric invariance $\theta \Rightarrow f(\theta)$, which is studied in the modern QFT, say in \cite{PolyakovGFS}, Chapter 9.

For the mathematical reader, we reproduce this computation in the Appendix without any reference to the path integrals for the Klein-Gordon particle, using a polygonal approximation of the loop
The result is the following Gaussian distribution
\begin{eqnarray}
    &&W[P] \propto \int_0^\infty d T \exp{- \frac{\gamma^2}{2 \nu^2} \int_0^T d s\left(\frac{\sigma^2}{r_0^2} +\vec P(s)^2\right)};
\end{eqnarray}

Fourier coefficients $ p_\alpha(n)$ can parametrize this periodic trajectory
\begin{eqnarray}
    &&P_\alpha(s) = \sum_{n=-\infty; n \neq 0}^\infty  p_\alpha(n) \exp{\frac{2 \pi \I n s}{T}} ;\\
    &&  \bar p_\alpha(m) = p_\alpha(-m);\\
    && \VEV{p_\alpha(n)  p_\beta(m)}  = \frac{\delta_{\alpha\beta}\nu^2}{\gamma^2 T} \delta_{n,-m}
\end{eqnarray}
The term with $n=0$ is omitted, as it drops from the closed loop integral $ \int \vec C(s) \cdot d\vec P(s)$.
These Fourier coefficients at fixed $T$ are Gaussian variables with the above variance matrix $\VEV{p_\alpha(n)  p_\beta(m)}$.
This property is, in principle, sufficient to compute the terms of the perturbative expansions in inverse powers of viscosity (see below).

These Fourier coefficients do not decrease with number, so the curve $\vec P(\theta,0) $ is fractal rather than smooth. In particular, $\vec P(T) \neq \vec P(0)$.

Note that the limit $\sigma \to 0$ of smooth initial velocity field corresponds to the zero mass for this relativistic particle. This limit does not lead to infinities in correlation functions in three or more dimensions. 
This important question deserves more investigation in our case. \textbf{If this limit exists, we can prove the no-explosion theorem for the smooth initial field.}

Also note that the limit $\sigma \to 0, r_0 \to 0$ at fixed ratio 
$\frac{\sigma^2}{r_0^2}$ corresponds to local noise with vanishing variance. 
In this limit, velocity gradients become non-smooth, and thus the strict 
regularity conditions of the Millennium problem are not met. Nevertheless, 
even an infinitesimal local noise provides a physically meaningful regularization. 
Physically, a complete absence of fluctuations ($\sigma = 0$) corresponds to vanishing 
temperature in which case the fluid would freeze, rendering the Navier–Stokes 
equations physically irrelevant.

Let us summarize the results of this section. We bypassed the nonlinear Cauchy problem for the \NS{} equation by treating it as a limit of the solvable Cauchy problem in the linear loop equation.
As we argued, the unavoidable thermal noise in any physical fluid makes such a limit the correct definition.

We have advanced the Cauchy problem further by reducing the dimensionality from $d +1 $ dimensions in the \NS{} equation to $1 + 1$ dimensions in the MLE.
 
 Before elaborating on that dimensional reduction, we consider an exact solution of the loop equation corresponding to the random global rotation of the original velocity field and the associated Cauchy problem.

\section{Universality and Scaling of MLE}\label{Scaling}
Various symmetry properties affect solutions' space, especially their fixed trajectories.

First of all, this equation is parametric invariant:
\begin{eqnarray}
    \vec P(\theta, t) \Rightarrow \vec P(f(\theta), t);\; f'(\theta) >0;
\end{eqnarray}
Naturally, any initial condition $\vec P(\theta, 0) = \vec P_0(\theta)$ will break this invariance; each such initial data will generate a family of solutions corresponding to initial data $\vec P_0(f(\theta))$.

The lack of explicit time dependence on the right side leads to time translation invariance:
\begin{eqnarray}
    \vec P(\theta, t) \Rightarrow \vec P(\theta, t+ a)
\end{eqnarray}
Less trivial but also very significant is the time-rescaling symmetry:
\begin{eqnarray}
    \vec P(\theta, t) \Rightarrow \sqrt{\lambda}  \vec P(\theta, \lambda t),
\end{eqnarray}
This symmetry follows because the right side of \eqref{PloopEq} is a homogeneous functional of the third degree in $\vec P$ without explicit time dependence.

This scale transformation is quite different from the scale transformation in the \NS{} equation, which involves rescaling of the viscosity parameter:
\begin{eqnarray}
    && \vec v(\vec r, t) \Rightarrow \frac{\vec v(\alpha \vec r, \lambda t)}{\alpha\lambda} ;\\
    && \nu \Rightarrow \nu \frac{\alpha^2}{\lambda} 
\end{eqnarray}
In our case, there is a genuine scale invariance without parameter changes; in other words, no dimensional parameters are left in MLE.

Using this invariance, one can make the following transformation of the momentum loop and its variables
\begin{eqnarray}\label{PFT}
    \vec{P} = \sqrt{\frac{\nu}{2(t+ t_0)}} \frac{ \vec{F}}{\gamma}
\end{eqnarray}
The new vector function $\vec{F}$ satisfies the following dimensionless equation
\begin{eqnarray}\label{Fequation}
   &&2\partial_\tau \vec{F} = \left(1- (\Delta \vec{F})^2\right) \vec{F}  +\Delta \vec{F} \left(\gamma^2 \vec{F} \cdot \Delta \vec{F} +\I \gamma \left( \frac{(\vec{F} \cdot \Delta \vec{F})^2}{\Delta \vec{F}^2}- \vec{F}^2\right)\right);\\
   && \tau = \log \frac{t + t_0}{t_0};
\end{eqnarray}
The \textbf{viscosity disappeared from this equation}; now it only enters the initial data
\begin{eqnarray}
     \vec F(\theta, 0) = \sqrt{\frac{2 t_0}{\nu}} \vec P_0(\theta)
\end{eqnarray}
This universality property is extremely important.
Note that the loop functional is now represented as
\begin{eqnarray}
\label{PsiFsol}
    &&\Psi(C,t) = \VEV{\exp{\frac{\displaystyle \imath \oint  d\vec C(\theta) \cdot \vec F\left(\theta, \log \frac{t + t_0}{t_0}\right)}{\sqrt{2 \nu (t + t_0)}}}} 
\end{eqnarray}
with the square root of viscosity in the denominator as a coupling constant in nonlinear QFT. The averaging  $\VEV{\dots}$ goes over the manifold of solutions $\vec F(\theta,\tau)$ of the ODE \eqref{Fequation}.

This formula immediately suggests that turbulence is a quasiclassical phenomenon in our quantum mechanical system that can be studied by the well-known WKB method (or corresponding methods developed by Kolmogorov and Maslov in the mathematical literature).

In the conventional approach to fluid mechanics, based on the \NS{} equation, the Reynolds number distinguishing between the laminar and turbulent flow enters the equation and controls the relative magnitude of nonlinearity. One has to study various regimes in that equation, including the inviscid limit presumably related to the turbulence, but different from the Euler equation due to the dissipation anomaly.

In our dual theory, representing the same \NS{} dynamics as a quantum system, the dynamical equation \eqref{Fequation} is universal; it does not depend upon the Reynolds number.
This number enters initial data and the relation between our solution for $\vec F$ and the loop functional (i.e., the PDF for the circulation as a functional of the shape of the loop).

The evolution of the loop functional $\Psi$ inside the unit circle in the complex plane in Fig.\ref{fig::LoopFunctionalEvolution} goes by universal trajectories, determined by \eqref{Fequation}. The Reynolds number describes the initial position of this $\Psi$ inside the circle. The distance $|\Psi-1|$ from the fixed point $\Psi_* =1$ is the true measure of turbulence. One could expect a laminar flow solution in some small vicinity of this fixed point (corresponding to potential flow).
\section{Vorticity Statistics and Momentum Loop}

The multiple area derivatives of $\Psi$ yield the formula  
\begin{eqnarray}
   && \VEV{\exp{\left(\imath \Gamma_C[v]\right)} \omega(r_1)\otimes \omega(r_2) \dots \otimes\omega(r_n)}_v = \nonumber\\
    &&\frac{n!}{(2 \pi)^n}\VEV{\exp{\imath \Gamma_{\tilde C}[P]} \idotsint\limits_{0<\theta_1<\dots<\theta_n<2\pi} d\theta_1 \hat \omega(\theta_1)\otimes d\theta_2 \hat \omega(\theta_2)\dots \otimes d\theta_n\hat \omega(\theta_n)}_P;\\
    \label{omegadef}
    && \hat \omega(\theta) \propto \imath\vec P(\theta)\times \Delta\vec P(\theta);\\
    &&\tilde C = \oplus_k L(r_C, r_k, r_C);
\end{eqnarray}  
where $L(r_C, r_k, r_C)$ is a double line from the center of mass, $r_C = \sum_{k=1}^n r_k/n$, to each point $r_k$ and back to $r_C$. The entire loop $C$ resembles bicycle spokes without a wheel (see Fig.~\ref{fig::BykeWheel}). The angles $\theta_1, \dots, \theta_n$ are ordered on the unit circle.  

\begin{figure}[h]
    \centering
    \includegraphics[width=0.5\textwidth]{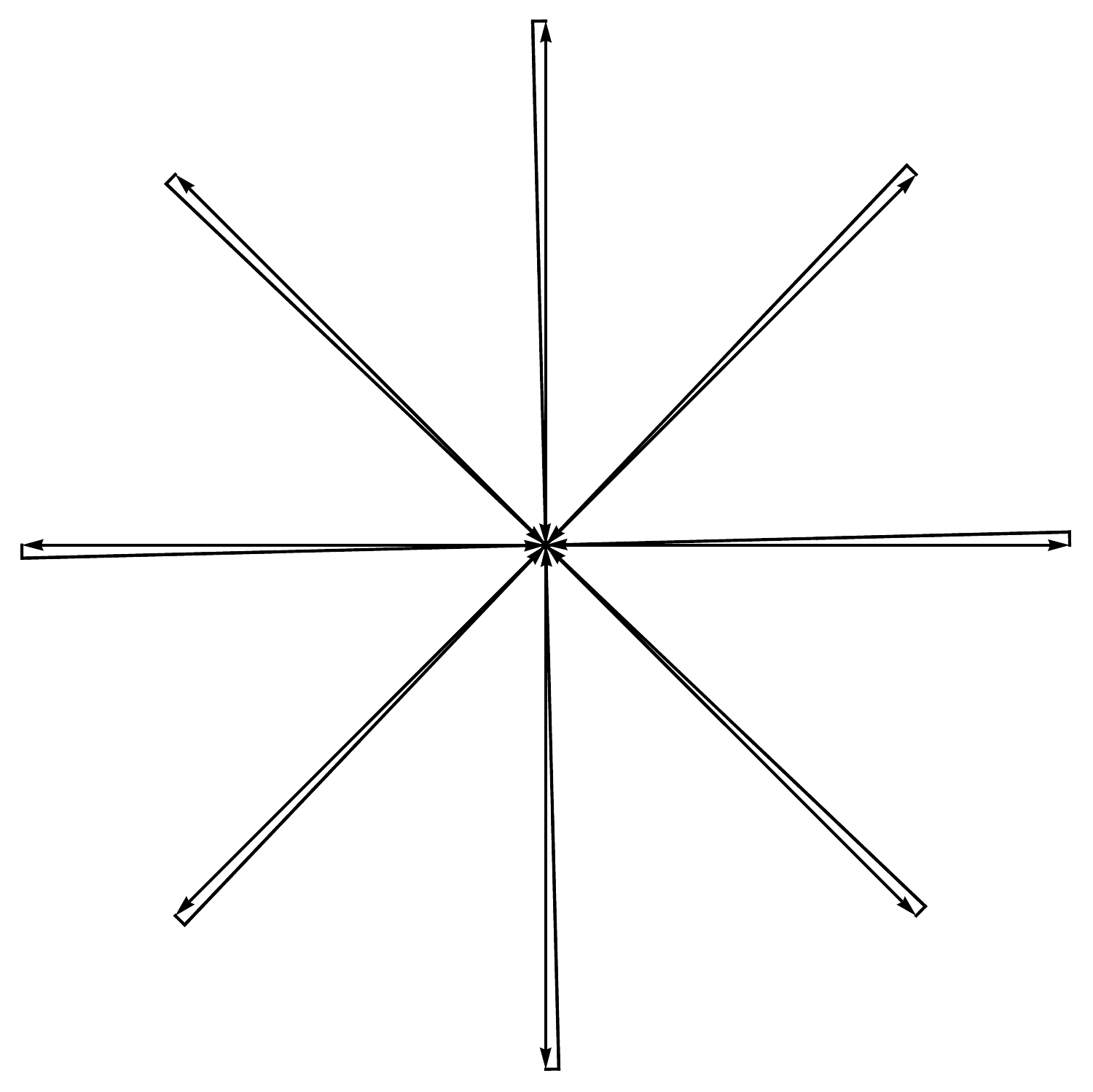}
    \caption{The loop $C$ used to compute vorticity correlation functions. The endpoints of each spoke correspond to $\vec C(\theta_k) = \vec r_k$, while the center represents the center of mass, $\vec r_C = \frac{\sum_1^n \vec r_k}{n}$. Each spoke consists of two segments: one from the center to $\vec r_k$, and another returning to the center. Since the area enclosed by the loop is zero, the circulation vanishes, i.e., $\Gamma_C[v]=0$.}
    \label{fig::BykeWheel}
\end{figure}

These double lines cancel in the circulation,  
\begin{equation}
    \Gamma_C[v]=0,
\end{equation}
but not in the momentum loop representation. This occurs because, in the momentum loop integral  
\begin{equation}
    \int d\theta \vec C'(\theta)\cdot \vec P(\theta),
\end{equation}
the momentum variable $\vec P(\theta)$ depends explicitly on $\theta$, preventing the cancellation of the two line integrals in a double line.  

The factor $\frac{n!}{(2 \pi)^n}$ arises as follows. In the velocity representation, where the circulation is zero, the vorticities  
\begin{equation}
    \vec\omega(\vec C(\theta_k)) = \vec \omega(\vec r_k)
\end{equation}
depend directly on $\vec r_k$, rather than on the angles $\theta_k$. Consequently, the integral over ordered angles $\theta_k$ yields the factor $\frac{(2 \pi)^n}{n!}$, which we must compensate with the inverse factor.  

This leads to the fundamental formula for multiple vorticity correlation functions:  
\begin{eqnarray}\label{MomLoopCorr}
   && \VEV{\omega(r_1)\otimes \omega(r_2) \dots \otimes\omega(r_n)}_v = \nonumber\\
    && \frac{n!}{(2 \pi)^n}\VEV{\exp{\imath \Gamma_{\tilde C}[P]} \idotsint\limits_{0<\theta_1<\dots<\theta_n<2\pi} d\theta_1 \hat \omega(\theta_1)\otimes d\theta_2 \hat \omega(\theta_2)\dots \otimes d\theta_n\hat \omega(\theta_n)}_P;\\
    && \Gamma_{\tilde C}[P] = \sum_{k=1}^n \int_0^1 d \eta (\vec r_k - \vec r_C) \cdot \vec Q_k(\eta);\\
    && \vec Q_k(\eta) = \vec P\left(\tilde\theta_k(1-\eta) + \theta_k \eta\right) - \vec P\left(\tilde\theta_{k+1}\eta + \theta_k (1-\eta)\right);\\
    && \tilde\theta_k = \frac{\theta_{k-1} + \theta_k}{2} \mod 2 \pi.
\end{eqnarray}  

This formula expresses the multiple correlation functions in terms of expectation values within the Euler ensemble. In \cite{migdal2024quantum}, we use this approach to compute the two-point correlation function, $\VEV{\omega \omega}$, as a function of relative distance for $n=2$. The Fourier-space correlation function,  
\begin{equation}
    \VEV{\vec \omega(k)\cdot\vec \omega(-k)}
\end{equation}
(i.e., the energy spectrum), is found to be real and positive.  

We conclude that the full statistical description of the rotational part of the velocity field (i.e., vorticity) is related to the statistics of the solutions $\vec P(\theta, t)$ of the MLE.  

A crucial aspect of this relation is the violation of time-reversal symmetry. Under time reversal, vorticity changes sign, which would typically eliminate vorticity correlations for an odd number of points ($n$). However, this is not the case in the momentum loop representation \eqref{MomLoopCorr}.  

To see why, consider the complex conjugation of the loop functional, which corresponds to time reversal in the original velocity representation. The vorticity operator $\hat \omega(\theta)$ in \eqref{omegadef} contains an imaginary unit, so it changes sign. The $P$-circulation term in the exponential is real but changes sign under the coordinate reversal $\vec r_k \Rightarrow - \vec r_k$, thereby compensating for the complex conjugation effect.  

This results in the identity  
\begin{eqnarray}
     && \VEV{\omega(r_1)\otimes \omega(r_2) \dots \otimes\omega(r_n)}_v = (-1)^n \VEV{\omega(-r_1)\otimes \omega(-r_2) \dots \otimes\omega(-r_n)}_v,
\end{eqnarray}  
which reflects parity conservation but does not enforce time-reversal symmetry. As a consequence, the expectation value of an odd number of vorticities remains an odd function of the coordinates.  

More precisely, in both even and odd cases, the expectation values remain real, but they are linked to the real and imaginary parts of the exponential, depending on the number of imaginary units ($\imath$) in the product of vorticities:  
\begin{eqnarray}
    && \VEV{\omega(r_1)\otimes \omega(r_2) \dots \otimes\omega(r_n)}_v = \nonumber\\
    &&\frac{n!}{(2 \pi)^n}
   \Re \VEV{\exp{\imath \Gamma_{\tilde C}[P]} \idotsint\limits_{0<\theta_1<\dots<\theta_n<2\pi} d\theta_1 \hat \omega(\theta_1)\otimes d\theta_2 \hat \omega(\theta_2)\dots \otimes d\theta_n\hat \omega(\theta_n)}_P.
\end{eqnarray}  

\section{Laminar flow at small time and seeds of turbulence}
The viscosity enters the MLE's denominator, making it straightforward to investigate the laminar flow (large viscosity) and even turbulent flow (small viscosity).

Let us start with the laminar flow. It corresponds to small $\vec F$, in which case the equation \eqref{Fequation} linearizes and can be explicitly solved
\begin{eqnarray}
    \vec F(\theta, t) \to \vec P_0(\theta) \sqrt{\frac{2(t_0 + t)}{\nu}} + O(F^3)
\end{eqnarray}
This solution will stay smooth when starting with the smooth initial value $\vec P_0(\theta)$. There will be no discontinuity in $\vec F(\theta,t)$ and no discontinuity in $\vec P(\theta,t)$.

For the loop functional this means zero area derivative, in other words, potential flow without vorticity.
Moreover, this flow will stay as a potential flow in every order of the formal perturbation expansion in inverse powers of $\nu$ for an arbitrary smooth initial value $\vec P_0(\theta)$.

However, any finite initial discontinuity in $\vec P_0(\theta)$ would lead to nontrivial terms of this perturbation expansion. These terms will be singular but scale as higher powers of $\Delta \vec F$. One may expect these corrections to be controlled at a large enough viscosity (compared to initial circulation).

The above thermal fluctuations lead to a small but singular contribution to the initial momentum loop. The Fourier coefficients $ \vec p_n$ do not decrease with order $n$, leading to the delta function singularity in the correlation function $\VEV{\vec P(\theta)\otimes \vec P(\theta') }\propto \delta(\theta-\theta')$, which is stronger than the discontinuity, required for the presence of vorticity.

After sufficient time, these small singular terms may lead to larger singular terms in the solution.

The recent paper \cite{Bandak_2024} argued that the thermal fluctuations could produce turbulence in finite time, comparable with experimental times of the large eddy formation. In other words, these small fluctuations could quickly grow and end up as large random eddies observable in experiments by order of magnitude estimates in \cite{Bandak_2024}.

 Our theory considers two possible asymptotic regimes: decaying turbulence or a finite-time explosion.
We study these regimes in the subsequent sections.

\section{Decaying turbulence}\label{DeTur}
The solutions originating deep inside the unit circle, far from $\Psi=1$, can become turbulent and eventually decay to $\Psi \to 1$ due to energy dissipation by vorticity structures. This decay for $\vec P(\theta,t)$ corresponds to the fixed point equation for $\vec F$
\begin{eqnarray}
\label{FP}
     && \left((\Delta \vec{F})^2 -1\right)\vec F   =\Delta \vec{F} \left(\gamma^2 \vec{F} \cdot \Delta \vec{F} +\I \gamma \left( \frac{(\vec{F} \cdot \Delta \vec{F})^2}{\Delta \vec{F}^2}- \vec{F}^2\right)\right)
\end{eqnarray}
This fixed point $\vec F(\theta)$ is not a solution of the Cauchy problem for the loop functional, though we expect the solution of some Cauchy problems to asymptotically approach this fixed point at a large time.

This fixed point represents the solution of the loop equation \textbf{with the boundary condition $\Psi(\theta,+\8) =1$}. This boundary condition describes the flow eventually stopping as a result of the dissipation of kinetic energy $$E =\int d^3 r \frac{\vec v^2}{2}, \;  \partial_t E = - \nu \int d^3r \vec \omega^2 <0.$$
\subsection{Fixed point solution}
The saddle point equation \eqref{FP} was solved and investigated in previous papers \cite{migdal2023exact, migdal2024quantum}. 
The solution for $\vec F(\theta)$ is a fractal curve defined as a limit $N \to \infty$ of the polygon $\vec{F}_0\dots \vec{F}_N= \vec{F}_0$ with the following vertices
\begin{eqnarray}\label{Fsol}
    && \vec{F}_k =\Omega \cdot \frac{\left\{\cos (\alpha_k), \sin (\alpha_k), \I\cos \left(\frac{\beta }{2}\right)\right\}}{2 \sin \left(\frac{\beta }{2}\right)} ;\\
    && \theta_k = \frac{k}{N}; \; \beta = \frac{2 \pi p}{q};\; N \to \infty;\\
    &&\alpha_{k} = \alpha_{k-1} + \sigma_k \beta;\; \sigma_k =\pm 1,\;\beta \sum\sigma_k = 2 \pi p r;\\
    && \Omega \in SO(3)
\end{eqnarray}
The parameters $ \hat{\Omega},p,q,r,\sigma_0\dots\sigma_{N}= \sigma_0$ are random, making this solution for $\vec{F}(\theta)$ a fixed manifold rather than a fixed point.
We suggested in \cite{migdal2023exact} calling this manifold the big Euler ensemble of just the Euler ensemble.

It is a fixed point of \eqref{FP} with the discrete version of discontinuity and principal value:
\begin{eqnarray}
    &&\Delta \vec{F} \equiv \vec{F}_{k} - \vec{F}_{k-1};\\
    && \vec{F} \equiv \frac{ \vec{F}_{k} + \vec{F}_{k-1}}{2}
\end{eqnarray}
Both terms of the right side \eqref{Fequation} vanish; the coefficient in front of $\Delta \vec{F}$ and the one in front of $\vec{F}$ are both equal zero.
Otherwise, we would have $\vec{F} \parallel \Delta \vec{F}$, leading to zero vorticity \cite{migdal2023exact}.

This requirement leads to two scalar equations
\begin{subequations}\label{twoeqs}
  \begin{eqnarray}
    && (\Delta \vec{F})^2 =1;\\
    && \vec{F}^2 - \frac{\gamma^2}{4} =  \left(\vec{F} \cdot \Delta \vec{F}- \frac{\I \gamma}{2}\right)^2;
\end{eqnarray}  
\end{subequations}

The integer numbers $\sigma_k= \pm1$ came as the solution of the loop equation, and the requirement of the rational $\frac{p}{q}$ came from the periodicity requirement, as we prove below.

In our limit, the integral for velocity circulation becomes the Lebesque sum:
\begin{eqnarray}
\label{CircByParts}
    && \oint d \vec{C}(\theta) \cdot \vec{F}(\theta) \to
    \sum_k\Delta \vec{C}_k \cdot \vec{F}_k;
\end{eqnarray}

A remarkable property of this solution $\vec{F}(\theta,t)$ of the loop equation is that even though it satisfies the complex equation and has an imaginary part, the resulting circulation \eqref{CircByParts} is real!
The imaginary part of the $\vec{F}_k$ does not depend on $k$ and thus drops from the total sum $\sum_k\Delta \vec{C}_k =0$ due to the periodicity of the loop $C$.

Another noteworthy observation is that the solution \eqref{Fsol} exhibits a symmetric distribution: $-\vec{F}_k$ and $\vec{F}_k$ share the same PDF due to integration over all possible rotation matrices. A rotation by $\pi$ in the $xy$ plane, combined with complex conjugation, leaves the distribution invariant. Moreover, as we have seen, the imaginary part of $\vec{F}_k$ does not contribute to the loop functional. Consequently, the PDF of $ \Gamma = \sum_k \Delta \vec{C}_k \cdot \vec{P}_k$ is an even function.

However, multiple vorticity correlations are determined via the area derivative. The corresponding vector $\hat{\omega}_k$ in \eqref{omegaK} is quadratic in $\vec{P}$, and as a result, this reflection symmetry does not suppress the expectation value of an odd number of $\omega_k$ factors. For instance, the triple correlator $\oal(1) \obe(2) \oga(3)$ remains nonzero. The corresponding triple velocity correlator, $ \val(1) \vbe(2) \vga(3)$, can be obtained via Fourier transformation, where $\vec{v}_k = \I \vec{k} \times \vec{\omega}_k / \vec{k}^2$, up to purely potential terms linear in the coordinates.

These potential terms, however, do not contribute to the energy flow in wavevector space, contrary to popular belief, as discussed in \cite{migdal2024quantum}. Instead, they generate gradients of the delta function, $\pd{\vec{k}} \delta^3(\vec{k})$, rather than a constant energy flux. Such terms are influenced by boundary conditions at infinity and, therefore, do not represent spontaneous stochasticity caused by random vortex structures within the bulk.

\subsection{The proof of the Euler ensemble as a fixed point of MLE}
Let us present here the proof of this solution, verified by \Mathematica{} (see \cite{DecayTurb23}).
\begin{mytheorem}
The Euler ensemble solves the discrete MLE.
\end{mytheorem}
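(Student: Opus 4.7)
The plan is to verify that the discrete MLE fixed-point equation \eqref{FP} decomposes into the two scalar conditions \eqref{twoeqs} at every vertex of the polygonal ansatz \eqref{Fsol}. Since the structure of \eqref{FP} is (scalar)$\,\vec{F}\,=\,$(scalar)$\,\Delta\vec{F}$, and since $\vec{F}\parallel\Delta\vec{F}$ is excluded because it would force vorticity to vanish (cf.\ \cite{migdal2023exact}), the two scalar coefficients must vanish independently, producing precisely \eqref{twoeqs}. Rotational invariance of all inner products entering \eqref{FP} under $\vec F \mapsto \Omega\vec F$ with $\Omega\in SO(3)$ lets me fix $\Omega = I$ for the computation.

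Writing $R = 1/(2\sin(\beta/2))$ and $\bar\alpha = (\alpha_k+\alpha_{k-1})/2$, and using $\alpha_k-\alpha_{k-1} = \sigma_k\beta$ with $\sigma_k^2 = 1$, the sum-to-product identities give
$$\Delta\vec{F} = \sigma_k\{-\sin\bar\alpha,\cos\bar\alpha,0\},\qquad \vec{F} = \tfrac{1}{2}\cot(\beta/2)\{\cos\bar\alpha,\sin\bar\alpha,i\}.$$
The imaginary third component drops out of $\Delta\vec F$ because it is the same constant $iR\cos(\beta/2)$ at every vertex, while $R\cdot 2\sin(\beta/2) = 1$ normalizes the real transverse part so that $(\Delta\vec{F})^2 = \sigma_k^2 = 1$. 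This is the first equation of \eqref{twoeqs}.

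The crucial ingredient is that $\{\cos\bar\alpha,\sin\bar\alpha,i\}$ is a complex null vector, $\cos^2\bar\alpha+\sin^2\bar\alpha+i^2 = 0$, so $\vec{F}^2 = 0$ identically, and the same null vector is Euclidean-orthogonal to the planar $\Delta\vec{F}$, giving $\vec{F}\cdot\Delta\vec{F} = 0$. Substituting into the second equation of \eqref{twoeqs} yields $0 - \gamma^2/4 = (0-i\gamma/2)^2 = -\gamma^2/4$, an identity. Thus every coefficient in \eqref{FP} vanishes vertex-wise, and reinstating an arbitrary $\Omega\in SO(3)$ does not change any of the three invariants $\vec{F}^2,\vec{F}\cdot\Delta\vec{F},(\Delta\vec{F})^2$, so the fixed-point identity holds for the full ensemble.

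Finally, global periodicity $\vec F_N = \vec F_0$ reduces to $\alpha_N \equiv \alpha_0 \pmod{2\pi}$, since the $z$-component is $k$-independent. With $M = \sum_{k=1}^N \sigma_k \in \mathbb{Z}$, the condition $\beta M \in 2\pi\mathbb{Z}$ forces $\beta/(2\pi)$ to be rational; writing $\beta = 2\pi p/q$ in lowest terms with $\gcd(p,q)=1$ gives $q\mid M$, i.e.\ $M = qr$ for some integer $r$, and hence $\beta\sum\sigma_k = 2\pi p r$, which is exactly the quantization condition in \eqref{Fsol}. The main obstacle is structural rather than computational: one must recognize that the imaginary coefficient $i\cos(\beta/2)$ of the third component is tuned so that $\vec{F}$ lies on the complex null cone \emph{and} is simultaneously perpendicular to $\Delta\vec{F}$; without both properties the $-\gamma^2/4$ counterterm in the second equation of \eqref{twoeqs} would fail to close, and no purely real ansatz could solve the discrete MLE.
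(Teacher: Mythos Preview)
Your proof is correct, but it proceeds in the opposite direction from the paper. You take the explicit formula \eqref{Fsol} as given and verify the two scalar conditions \eqref{twoeqs} by the direct observation that the midpoint vector $\vec F=\tfrac12\cot(\beta/2)\{\cos\bar\alpha,\sin\bar\alpha,\I\}$ is a complex null vector orthogonal to $\Delta\vec F$, so that $\vec F^2=0$ and $\vec F\cdot\Delta\vec F=0$ reduce the second equation to the tautology $-\gamma^2/4=(-\I\gamma/2)^2$. The paper instead starts from the general real-circulation ansatz $\vec F_k=\I\vec A+(\vec f_{k-1}+\vec f_k)/2$, separates real and imaginary parts of \eqref{twoeqs}, and shows that the constraints force $\vec f_k$ onto a circle of radius $R=1/(2\sin(\beta/2))$ in the plane orthogonal to $\vec A$, with $|\vec A|=\tfrac12\cot(\beta/2)$ and step angles $\pm\beta$; periodicity then fixes $\beta\in 2\pi\mathbb Q$.

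What each buys: your argument is shorter and isolates the algebraic mechanism (the null-cone/orthogonality coincidence) that makes the fixed point work for all $\gamma$, a feature the paper only notes in passing; the paper's argument, by contrast, shows that the Euler ensemble is not merely \emph{a} solution but essentially the only one compatible with real circulation, since the circle, its radius, and the $\mathbb Z_2$ step structure are all forced rather than assumed. For the theorem as stated, your verification is entirely sufficient.
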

\begin{proof}

We start from the general Anzatz with real vectors $\vec A, \vec f_k$ , corresponding to the real circulation in \eqref{CircByParts}
\begin{eqnarray}
    &&\vec F_k = \I \vec A + (\vec f_{k -1} + \vec f_k)/2;\\
    &&\Delta \vec F_k = \vec f_{k} - \vec f_{k-1};\\
    && (\vec f_{k} - \vec f_{k-1})^2=1
\end{eqnarray}
Analyzing the imaginary and parts of the second equation in \eqref{twoeqs}, we observe that the imaginary part will vanish provided
\begin{eqnarray}
    &&\vec A \cdot \vec f_k =0 \forall{k};\\
    && \vec f_k^2 = \vec f_{k-1}^2 \forall{k};
\end{eqnarray}
We conclude that $\vec f_k$ belongs to a circle with some radius $R$ in the origin of the plane,  which plane is orthogonal to $\vec A$.
In the coordinate frame where $\vec A = \{0,0,A\}$ 
\begin{eqnarray}
    \vec f_k = R \left\{\cos(\alpha_k), \sin(\alpha_k), 0\right\}
\end{eqnarray}
The $SO(3)$ matrix needed to rotate our vectors to this coordinate frame can be absorbed into the rotation matrix $\Omega$ we have in our solution.

The radius $R$ and $A$ are determined by the real part of our equations as follows
\begin{subequations}
    \begin{eqnarray}
  &&4 A^2 = 2 R^2 \left(1 + \cos(\alpha_{k} - \alpha_{k-1})\right);\\
  && 1 = 2 R^2 \left(1 - \cos(\alpha_{k} - \alpha_{k-1})\right);
\end{eqnarray}
\end{subequations}
Solving these two equations, we find the $\mathbb{Z}_2$ variables at every step
\begin{eqnarray}
    \alpha_{k} = \alpha_{k-1} + \beta \sigma_k,\; \sigma_k^2 =1;
\end{eqnarray}
The radius $R$ and the length $A= |\vec A|$ are related to this angular step $\beta$
\begin{eqnarray}
   &&R =  \frac{1}{2 \sin \left(\frac{\beta }{2}\right)};\\
   && A = \frac{1}{2 \tan \left(\frac{\beta }{2}\right)};
\end{eqnarray}
The periodicity of the sequence $\vec f_k$ requires the angular step to be a fraction of $2 \pi$, which brings us to the Euler ensemble \eqref{Fsol}.
\end{proof}
\pctPDF{RegularPolygons}{regular star polygons for Euler ensembles of various $p,q$. The $\sigma_k$ variable indicates the direction of the random step of the link  $k\leftrightarrow k+1$. The random walk could go several times around the polygon as long as it ends where it started.}
\subsection{Euler ensemble as a random walk on a regular star polygon}
Geometrically, the vertices $\vec f_k$ belong to the regular star polygon with $q$ sides of unit length, with vertices at $R \exp{\I k \beta}, k= 1\dots q$. They were classified by Thomas Bradwardine (archbishop of Canterbury) and later by Johannes Kepler in the 17th Century and are denoted as $\{q/p\}$ ( so-called Schläfli symbol). 

We show several examples in Fig. \ref{fig::RegularPolygons}. The general polygon is characterized by co-prime $p, q$ with $ p < q < N, (N-q) = 0 \mod 2 $.
Euler totients count these polygons. The number $ N > q$ counts the coordinates $\vec f_k$  covering our polygon several times, so that, in general, each geometric vertex is covered more than once. 

The Ising variables $\sigma_k$ describe a random walk around this polygon with the extra condition that it comes to the initial point after $N$ steps. 
The random walk goes $k \leftrightarrow k+1$
according to the sign of $\sigma_k$. 
The periodicity condition requires $\beta $ to be a rational fraction of $2 \pi$. 

This quantization of the angle and the radius brings the number theory to the statistical distribution. Each polygon may be covered several times during this random walk with this periodicity condition. A certain winding number $w$ is related to $ \sum_1^N\sigma_i = q r, w = p r$.
Surprisingly, such a fundamental random walk problem on the 500-year-old geometric manifold has been solved only now. 

\subsection{Euler ensemble as string theory with discrete target space}
This random walk problem can also be interpreted as a closed fermionic string in the discrete target space consisting of regular star polygons on a (randomly rotated) plane. Integrating over fermionic degrees of freedom in a quantum trace of the evolution operator is equivalent to summation over occupation numbers $n_k = {0,1}$, providing directions $\sigma_k = 2 n_k-1$ of the random walk. 

The target space coordinates are the vertices of the regular star polygons  $\{q,p\}$. 

The integration over target space made of these regular star polygons becomes a discrete sum over states of the Euler ensemble: the fraction $\frac{p}{q}$,  the configurations of fermionic occupation numbers $\nu_k = 0,1$ and the winding number $w = \frac{p}{q}\sum ( 2 \nu_k-1) $.

Placing these polygons for a fixed $N$ on a torus in 3D space ordered by the angle $\beta$ shows the world sheet of our discrete string in Fig.\ref{fig::StringView}, with red/green colors of sides indicating random directions of random walk (occupation number of fermions). The large disk (infinite at $N=\8$) corresponds to endpoints $\beta = \frac{2 \pi}{N},  \frac{2 \pi(N-1)}{N}$.

\pctWPDF{1}{StringView}{The world sheet of our discrete string made of regular star polygons with unit side. The red/green colors of the sides indicate random directions of random walk.}
The solution of the Euler ensemble\cite{migdal2023exact} is based on new number theory identities for sums of powers of cotangent of fractions of $\pi$. These identities relate these sums to Jordan multi-totient functions weighted with Bernoulli coefficients.

The nontrivial part of using the Euler ensemble is the formula \eqref{PsiFsol} relating this ensemble to the observable loop functional of the decaying turbulence theory.

In the string theory language, where the momentum loop is the target space along with fermionic occupation numbers, this formula is the dual amplitude for the discrete string theory, with $\frac{\Delta \vec C(\theta)}{\sqrt{2\nu (t + t_0)}}$  playing the role of external momentum distributed along the closed string position (regular star polygon) $\vec F(\theta)$.

\textbf{This turbulence/string duality reveals the hidden beauty of primes under the ugly mask of chaos in the observable turbulent flow.}
 
 The corresponding universal energy spectrum for the decaying turbulence was computed in quadrature \cite{migdal2024quantum} in the quasiclassical limit at $\nu = \tilde \nu/N^2 \to 0$, and it closely matched the data of real and numerical experiments.
 The detailed comparison with available real and numerical experiments in decaying turbulence was published in the previous work \cite{QuantumWebsite, migdal2024quantum}. 
 Let us show here the figures from this work Figure.\ref{fig::EDecayData}, \ref{fig::NewDNSFit} demonstrating the match between our theory and the experiments (real and numerical).
 \pctPDFRot{EDecayData}{The predictions of this theory compared with grid turbulence decay data from 1966. This data significantly deviate from Kolmogorov-Saffman model prediction $1.20$ but perfectly matches the leading decay index $\frac{5}{4}=1.25$ predicted by our theory.}
 \pctPDFRot{March25DNSFit}{We used the raw data from the 2025 DNS by A. Rodhiya and K.R. Sreenivasan (lattice $4096^3$) and the older one, (2024) by J.J Panicacheril, D. Donzis, and  K.R.Sreenivassan (lattice $1024^3$). 
 In the lower-left corner, the effective index for the second moment of velocity is plotted as a function of $\log r/\sqrt{t+ t_0} $ in the turbulent range.
Both data sets perfectly match the theoretical curve (green line) within the error bars. We only display the larger set results, which extend to lower values of the index, below 0.5. The errors are small in the upper part of the curve(above 0.5) and rapidly grow below that value, as it corresponds to large coordinate scales, where the lattice artifacts dominate.
Both matching curves theory/DNS deviate very far from the prediction of the K41 model $\eta = \frac{2}{3}$ .
On the upper left, the effective length $L(t)$ is compared with our prediction $\sqrt{t+ t_0}$ in the form of inverse function $t = -t_0 + a L^2$. It matches well in a wide time interval, which we interpret as decaying turbulence range. On the right side, there are two plots comparing decaying energy with the predictions (green) and the simple power law $ E \propto L(t)^{-\frac{5}{2}}$ for the two DNS data. There is a perfect match in the turbulent time range. Note that the dotted line ($ E \propto L(t)^{-5/2}$) deviates from the DNS data (red dots). This deviation reflects the subleading power terms included in the theoretical curve (green), which perfectly matches the DNS data within error bars.}
 
\subsection{The limit of large Reynolds number is not equivalent to vanishing viscosity}

The limit $\nu \to 0$ is tricky because $\nu$ is a measurable physical parameter with dimension $L^2/T$, and it fixes the magnitude of observable quantities. The real limit is the large Reynolds number $\RE = \frac{|\Gamma|}{\nu} \to \8$ where $\Gamma $ is a scale of circulation in the problem.
Our formula for the dissipation is 
\begin{eqnarray}
    &&\mathcal E = \nu \VEV{\vec \omega^2} \propto\frac{1}{\nu t^2} \VEV{ \sum_{k,n}\left(\vec F_k \times \Delta \vec F_k\right)\cdot \left(\vec F_n \times \Delta \vec F_n\right)  \exp{ \frac{\I\Gamma_F}{\sqrt{2\nu t}}}}_F;\\
    && \Gamma_F = \sum_k \Delta \vec C_k \cdot\vec F_k;
\end{eqnarray}
Mathematically, the same turbulent limit $|\Gamma_F| \gg \sqrt{\nu t}$ can be achieved by tending $\nu \to 0$, as only the ratio enters the exponential.

This limit will exist in our theory if we balance the powers of $\nu \to 0$ with powers of $N \to \8$ to keep the dissipation finite. This requires $\nu \sim 1/N^2$ as we found in the first paper \cite{migdal2023exact}.

The dimensionless parameters and functions of scaling arguments such as $|\vec r|/L(t), |\vec k| L(t)$ where $ L(t) = \sqrt{\tilde \nu (t + t_0)} $ all stay finite in this limit.

Other observable quantities, such as the energy spectrum $ E(k,t) = \oh\VEV{|\vec v_{\vec k}|^2 |}|\vec k|^2 $ or kinetic energy $E(t) = \int E(k,t) d k$ are not initially proportional to $\nu$ which means that the powers of $N $ will not balance in these quantities.

The resolution of this paradox is as follows \cite{migdal2024quantum}. The two-point vorticity correlation function must have a pole in the mathematical limit $\nu \to 0$ to provide finite energy dissipation:
\begin{eqnarray}\label{nuResidue}
    \VEV{\vec \omega^2} = \frac{F(\RE)}{\nu}
\end{eqnarray}
The residue $F( \8)$ in this pole is what we call the turbulent limit 
\begin{eqnarray}
    F(\8) = \lim_{N\to \8, \nu \to 0} \nu \VEV{\vec \omega^2}
\end{eqnarray}

We use the mathematical limit $N\to \8, \nu \to 0$ to compute this residue, but after that, we come back to the physical formula \eqref{nuResidue} with observable viscosity $\nu$ and $\RE=\8$.

So, we used the fictitious limit of zero viscosity to compute the residue of the observable correlator at zero viscosity.

\subsection{Continuum limit exists for the loop functional but not for the momentum loop}

There is a following peculiarity in our theory. The momentum loop $\vec P(\theta,t)$ has no continuum limit when $N \to \8$, but the original loop functional $\Psi(C,\gamma,t)$ stays finite in such a limit. To be more precise, there is renormalizability: the energy dissipation stays finite when $\nu \to \frac{\tilde \nu}{N^2} \to 0$.

The second moment of velocity difference $$\VEV{\Delta v^2} =\VEV{(\vec v(\vec r)-\vec v(0))^2}$$ as a function of scaling variable $|\vec r|/L(t)$ decays at small argument as an infinite series of power terms with growing positive powers $\frac{1}{t}\left(|\vec r|/L(t)\right)^{p}$.
The spectrum of these scaling indexes $p$ (unrelated to a dilatation operator as far as we know)  is given in the following table:
\begin{eqnarray}
\label{VVSpectrum}
    \left|
\begin{array}{c}
\text{scaling indexes $p$ of $\VEV{\Delta v^2}$} \\
\hline
  \fbox{$2 n\text{ if }n\in \mathbb{Z}\land n\geq 1$}\\
 \frac{5}{2}\\
 \frac{11}{2}\\
 \fbox{$7 \pm\imath t_{n}\text{ if }n\in \mathbb{Z}$} \\
 \fbox{$\frac{1}{2} (15+4 n)\text{ if }n\in \mathbb{Z}\land n\geq 0$} \\
\end{array}
\right|
\end{eqnarray}
Here $\pm t_n$ are imaginary parts of the zeros of $\zeta(z)$, all located at the ray $\Re z = \oh$, according to the Riemann hypothesis. At least, it was proven that these zeros are all inside the strip $ 0 < \Re z < 1$.

The second moment does not scale as a single power. The effective index (log derivative of the second moment as a function of coordinate difference), is a nontrivial function of the scaling variable $r/\sqrt{t}$. The DNS data from two numerical simulations of the NS equation is shown as red and blue dots with error bars in Fig \ref{fig::NewDNSFit}, together with our theory (green curve) and the Kolmogorov $2/3$ prediction (black dashed line).

Our theory matches the data within a few $\%$ margin of error. The Kolmogorov scaling is totally off the charts.
The data index crosses the K41 value without any inertial range (the effective index plot is supposed to show a plateau around $2/3$ in the K41 model). Thus, there is no cascade in decaying turbulence (see \cite{migdal2024quantum} for a discussion of this issue).

Our solution has a well-defined continuum limit for observable variables such as decaying kinetic energy, energy spectrum, and moments of velocity difference, closely matching experiments.

At the same time, the dual system-- the string itself-- does not have any continuum limit. The regular star polygons with unit side have the radia $R = 1/(2 \sin( \pi p/q))$, which vary between $\oh$ and $\8$ but do not converge to any continuous function, with $p,q $ being co-prime numbers.
The distribution of the variable $X(p,q) =\cot(\pi p/q)^2/N^2 = \left( 4 R^2 -1\right)/N^2$ for large co-prime $1 \ge p<q < N$ was studied in the previous work, and it is a discontinuous piecewise power like distribution
 \begin{eqnarray}\label{CotDist}
   && f_X(X)= \left(1-\frac{\pi ^2}{675 \zeta (5)}\right)\delta(X) +\nonumber\\
   &&\frac{\pi^3}{3} X\sqrt{X}\Phi\left(\floor*{\frac{1}{\pi \sqrt{X}}}\right);
\end{eqnarray}
depicted in Figure.\ref{fig::PiPhi}.
Here $\Phi(n)$ is the totient summatory function
\begin{eqnarray}
    \label{PhiDef}
    \Phi(q) = \sum_{n=1}^q \varphi(n)
\end{eqnarray}
\pctPDF{PiPhi}{Log log plot of the distribution \eqref{CotDist}}
This manifests the same phenomena we observed in the simplest solvable case of constant global rotation. This stationary solution of the NS equation studied in the section\ref{RandRot}  at finite $N$ is a finite sum of Gaussian random variables. However, there is no limit $N\to \8$ in this sum. Only the loop functional, related to the variance of the random variable $\vec P(\theta)$ tends to a finite limit related to the tensor area $\Sigma[C] = \oint_C \vec r \times d \vec r$.

In our solution of decaying turbulence, there are no random Gaussian variables: there's a random walk on random regular star polygons, equivalent to a string theory with discrete target space, which does not possess any continuum limit. At the same time, the dual amplitudes of this string theory have a continuum limit, providing the solution for the loop functional.

\subsection{An open problem of the stability of Euler ensemble as MLE fixed point}

The interesting and unexpected property of the Euler ensemble solution of the MLE is its independence of the spectral parameter $\gamma$. The $\gamma$ dependence reappears in the linearized MLE for the small deviations $\delta \vec F$ from the fixed point. These deviations describe the approach of the solution of the MLE to the fixed trajectory of decaying turbulence.

As we found in the first paper \cite{migdal2023exact}, these deviations decay by power laws with some indexes, depending on $\gamma$
\begin{eqnarray}
    \delta \vec F^{(i)}(\theta) \propto \psi_i(\theta|\gamma) t^{-\mu_i(\gamma)}
\end{eqnarray}
The spectral equation for these decay indexes $\mu_i(\gamma)$ was written down in \cite{migdal2023exact} for the finite $N$ in the Euler ensemble. The problem of the continuum limit of this spectrum is yet to be solved.

\section{Inconsistency of explosive solution}\label{MillProb}
Within our dual theory, there is, in principle, a possibility for finite-time explosion with $\vec F(\theta) \to \infty$ at some finite moment $\tau_c(\theta)$. 

In that case, only the third-degree terms will remain on the right side, with the linear term becoming negligible at $\tau \to \tau_c(\theta)-0$. The scale invariance fixes the time dependence in this case, so the solution becomes
\begin{eqnarray}
\label{Blow}
    &&\vec F(\theta, \tau) \to (\tau_c(\theta)- \tau)^{-\oh} \vec f(\theta);
\end{eqnarray}
We assume that the trajectory of singularity $\tau_c(\theta)$ is a continuous function of $\theta$ or a constant. In this case, all the terms on the right of the equation \eqref{FP} have a common time dependence $ (\tau_c(\theta)- \tau)^{-\frac{3}{2}}$, matching the left side.
The vector function $\vec f(\theta)$ must satisfy the following equation:
\begin{eqnarray}
    && \left((\Delta \vec{f})^2 +1\right)\vec f   =\Delta \vec{f} \left(\gamma^2 \vec{f} \cdot \Delta \vec{f} +\I \gamma \left( \frac{(\vec{f} \cdot \Delta \vec{f})^2}{\Delta \vec{f}^2}- \vec{f}^2\right)\right)
\end{eqnarray}
The left side of this equation for  $\vec f(\theta)$ differs from the left side of the equation \eqref{FP} for the fixed point for $\vec F$.

The following theorem proves the lack of a solution for this fixed point $\vec f$.
\begin{mytheorem}
\textbf{There is no explosive solution to the MLE with singularity position being a continuous function of the angle}.
\end{mytheorem}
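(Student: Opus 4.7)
\emph{Proof proposal.} The plan is to mirror the constructive proof of Theorem~1 while exploiting the crucial sign change on the left-hand side of the fixed-point equation for $\vec f$. Because the right-hand side of that equation is entirely proportional to $\Delta\vec f$, decomposing $\vec f = \lambda\,\Delta\vec f + \vec f_\perp$ with $\vec f_\perp\perp\Delta\vec f$ and projecting orthogonally to $\Delta\vec f$ yields at once the dichotomy
\begin{equation}
    \bigl((\Delta\vec f)^2 + 1\bigr)\,\vec f_\perp = 0.
\end{equation}
Either $\vec f\parallel\Delta\vec f$ at every $\theta$, in which case the vorticity operator $\hat\omega\propto\vec f\times\Delta\vec f$ vanishes identically and the blow-up Ansatz \eqref{Blow} carries no vortex structure---such a configuration cannot represent a vorticity-driven finite-time explosion of the Navier--Stokes flow---or else $(\Delta\vec f)^2=-1$ pointwise in $\theta$.

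To eliminate the remaining branch, I would adopt the same Ansatz that made Theorem~1 work, namely $\vec f_k = \I\vec A + \tfrac12(\vec g_{k-1}+\vec g_k)$ with real $\vec A$ and real $\vec g_k$. This form is exactly what is needed to keep the $\vec P$-circulation $\sum_k\Delta\vec C_k\cdot\vec f_k$ real: the $k$-independent imaginary offset $\I\vec A$ drops from the closed-loop sum $\sum_k\Delta\vec C_k=0$, whereas any $\theta$-dependent imaginary component of $\vec f_k$ survives and produces an imaginary $\vec P$-circulation that grows with the loop perimeter, forcing $|\Psi|>1$ on large loops and violating the unit-disk bound recorded in Section~\ref{loopFunc}. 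Under this Ansatz $\Delta\vec f_k=\Delta\vec g_k\in\mathbb{R}^{3}$ is manifestly real, and hence
\begin{equation}
    (\Delta\vec f_k)^{2} = |\vec g_k-\vec g_{k-1}|^{2} \ge 0,
\end{equation}
which is incompatible with $(\Delta\vec f_k)^{2}=-1$. Thus the nontrivial branch of the dichotomy is empty and the statement follows.

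The main obstacle I anticipate is making the reality argument for the $\vec P$-circulation fully rigorous. In the Euler ensemble proof this was invoked as a convenient starting Ansatz, but here it carries the entire weight of the non-existence claim, because the only escape from the sign obstruction is a momentum loop with a genuinely $\theta$-dependent imaginary part. A clean way forward is a scaling test on a one-parameter family of rescaled loops $\vec C^{(\lambda)}(\theta)=\lambda\vec C(\theta)$: a nonconstant imaginary component of $\vec f(\theta)$ produces an imaginary part of the circulation growing linearly in $\lambda$, inconsistent with $|\Psi|\le 1$ at arbitrarily large loops where no cancellation from bounded vorticity phases is available. Once that estimate is formalized, the sign flip from $-1$ to $+1$ on the left-hand side of \eqref{FP} instantly kills every candidate explosive fixed point, delivering the theorem.
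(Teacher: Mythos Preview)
Your proposal is correct and follows essentially the same route as the paper: the dichotomy ``either $\vec f\parallel\Delta\vec f$ (hence zero vorticity) or $(\Delta\vec f)^2=-1$'', followed by the reality constraint on $\Delta\vec f$ (imaginary part of $\vec f$ may only be a $\theta$-independent constant so that the circulation stays real), which forces $(\Delta\vec f)^2\ge 0$ and kills the second branch. The only addition you make is the scaling test $\vec C\mapsto\lambda\vec C$ to justify the reality requirement via the bound $|\Psi|\le 1$; the paper simply asserts this requirement without that extra argument.
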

\begin{proof}
Let us assume such a solution with some vector function $\vec f(\theta)$ and arrive at a contradiction.
    This vector equation is a linear combination of two vectors $a \vec f  = b \Delta \vec{f}$. Both coefficients $a,b$ must be zero. Otherwise, these two vectors are parallel, or else one of them vanishes. In both cases, the vorticity at the loop vanishes $\vec \omega(\vec C) \propto \vec f \times  \Delta \vec f =0$ at every point $\theta$ on the unit circle. Without vorticity, the solution reduces to the trivial fixed point $\Psi(\gamma, C) = 1$.

Now, the first coefficient $a$ can only vanish if  $\Delta \vec f$ has some imaginary component, which contradicts the requirement that the circulation $\oint d \vec C(\theta) \cdot \vec f(\theta) $ is a real variable. 

This requirement allows for a constant imaginary term in $\vec f(\theta) = \vec f_R(\theta) + \I \vec c$, as this continuous term will drop in the closed loop integral.
This requirement implies real discontinuity $\Delta \vec f$. In the explosion equation \eqref{Blow} with $a = (\Delta \vec{f}_R)^2 +1 >1$, there is no real solution with $a=0$.
\end{proof}

We have proven the inconsistency of the finite-time explosion in the momentum loop dynamics, i.e., the \NS{} dynamics with noisy initial data and constant or vanishing velocity at infinity. 

 This inconsistency is a consequence of the universality and dimensional reduction of the dual fluid dynamics, leading to much more stringent conditions on a potential explosion solution, which we have proven inconsistent.

In the conventional approach to the \NS{} equation, without the noise in initial data, Constantin and Fefferman have proven a theorem about the solution's regularity \cite{Constantin1993}. As a consequence of this theorem, any singular solution must have vorticity growing to infinity at some point in time in some region in space.

 In the MLE equation, vorticity at the loop would have a finite time  singularity with the above hypothetical solution 
\begin{eqnarray}
&&\VEV{\vec \omega(\vec C(\theta))\exp{\frac{\I \gamma \Gamma(C ,v)}{\nu}}} \nonumber\\
&&\propto \frac{1}{\tau_c(\theta)-\tau} \VEV{\vec f(\theta) \times \Delta \vec f(\theta) \exp{\I  \oint d \vec C(\theta')\frac{\vec f(\theta') }{\sqrt{2\nu(\tau_c(\theta')-\tau)}}}}_{f}
\end{eqnarray}
In particular, the mean square of vorticity (so-called enstrophy) would have a double pole
\begin{eqnarray}
    \VEV{\vec \omega(\vec C(\theta))^2}\propto \frac{\VEV{\left(\vec f(\theta) \times \Delta \vec f(\theta)\right)^2\exp{\I  \oint d \vec C(\theta')\frac{\vec f(\theta') }{\sqrt{2\nu(\tau_c(\theta')-\tau)}}}}}{(\tau_c(\theta)-\tau)^2} 
\end{eqnarray}
The growth of vorticity was proven necessary for the singular solution of \NS{} equation in \cite{Constantin1993}, and in our theory, it is ruled out.
   
If proven to stay in the smooth limit $\sigma \to 0$, without extra condition of continuous $\tau_c(\theta)$ this proof would provide a negative answer to the notorious problem of the explosion in the \NS{} equation, leaving two remaining alternatives: smooth (laminar) solution and a stochastic (turbulent) solution which we have found before \cite{migdal2023exact, migdal2024quantum} and reinterpreted in this work as a string theory.

Presumably, decaying turbulence occurs at a large enough Reynolds number in the initial data; otherwise, the solution stays smooth. 

\section{Discussion}\label{future}

The discovery of a new connection between different branches of science often lays the groundwork for unifying theories. Here, we outline potential generalizations of our findings on the equivalence between Navier-Stokes (NS) turbulence and random walks on discrete manifolds. These extensions open exciting directions for both physical applications and mathematical exploration.

\subsection{Physical Generalizations}

The applicability of the loop equations extends to other nonlinear systems that exhibit turbulence or finite-time singularities. Below, we propose several potential extensions, organized by increasing complexity:

\begin{itemize}
\item \textbf{Large-scale numerical simulations and quantum oscillations.} 
Our solution for the energy spectrum and velocity moments reveals an infinite 
set of decay indices, generalizing traditional multifractal scaling laws. 
Importantly, the presence of complex decay indices leads to oscillatory 
corrections, which represent a significant deviation from classical scaling 
predictions. These oscillations constitute a key and novel phenomenon: 
macroscopic quantum effects in classical fluids. Forthcoming large-scale 
numerical simulations will be crucial to clearly identify and verify these 
predicted oscillatory behaviors.

    \item \textbf{Turbulence forced by random rotations.} In this analytically solvable case, the loop equations, modified to include random centrifugal forces, exhibit a fixed point describing the steady state of forced turbulence with energy flow \cite{migdal2025inprep}.
    
    \item \textbf{Compressible fluids.} Adapting the framework to compressible flows requires replacing the incompressibility condition with variable-density dynamics, governed by the conservation of the volume element.
    
    \item \textbf{Magnetohydrodynamics (MHD).} In MHD, the circulation variables naturally split into two components encircling vorticity and magnetic fluxes. This generalization produces richer loop equations and leads to new insights into turbulence in plasmas. Preliminary analysis \cite{Migdal2025MHD} reduces the solution of MHD turbulence to the synchronized random walks on two regular star polygons.
    
    \item \textbf{Passive scalars.} The advection of passive scalar fields in a turbulent flow can be expressed via path integrals involving the loop functional. These integrals suggest analogies with gauge theories, offering new perspectives on scalar turbulence statistics.
    
    \item \textbf{General relativity.} Could turbulence mechanisms regularize naked singularities in Einstein's equations? While speculative, we propose studying the classical Einstein loop equations for stochastic effects. Analogous to the NS case, such stochastic mechanisms may provide an alternative to singularities, hinting at deeper connections between fluid dynamics and gravity.
\end{itemize}

\subsection{Mathematical Directions}

In addition to physical extensions, our framework suggests new mathematical avenues:

\begin{itemize}
    \item \textbf{Classification of dual PDEs.} Identifying classes of partial differential equations (PDEs) with dualities to quantum mechanical systems in loop space could deepen our understanding of nonlinear systems.
    
    \item \textbf{Dimensional reduction.} Exploring which dual PDEs can be reduced to one-dimensional nonlinear momentum loop equations may simplify otherwise intractable problems while preserving essential dynamics. The solution of the NS loop equations for arbitrary space $\mathbb R_d$ with dimension $d \ge 3$ was already found in \cite{migdal2023exact}. This solution does not exist for $d < 3$, but for larger dimensions is essentially the same: the random walk on regular star polygons, with random rotation of the plane in $\mathbb R_d$.
    
    \item \textbf{Nonlinear spaces.} Generalizing the observed random walks on star polygons to loop groups and other nonlinear spaces may reveal new connections between geometry, number theory, and fluid mechanics.
    \item \textbf{Summing over topologies?} The turbulent flows can exist (at least mathematically) in spaces of arbitrary topology, for example, on a manifold of higher genus. The natural generalization of the discrete string theory made of regular star polygons also allows topological transition when a $(q,p)$ polygon with a winding number $n$ bifurcates into two $(q,p)$ polygons with winding numbers $k, n-k$, which later evolve the same way as before. The opposite process of merging $k, m \Rightarrow k+m $ is also possible without breaking any features of preceding or subsequent propagation. How to sum our WKB solution of the Euler ensemble over topologies? How are they related to decaying turbulence in "physical" space?
\end{itemize}

\section{Conclusion}
In this paper, we introduced a novel theoretical framework connecting fluid dynamics 
to a solvable nonlinear dynamical system in loop space. The main contributions include:

\begin{itemize}
    \item A rigorous reformulation of the three-dimensional Navier–Stokes equations 
    into a solvable one-dimensional nonlinear loop-space equation.

    \item The \textit{No Explosion Theorem}, rigorously excluding finite-time singularities 
    for Navier–Stokes flows initiated with stochastic velocity field due to thermal fluctuations.

    \item The introduction and validation of an exact analytical solution, termed the 
    \textit{Euler ensemble}, describing the universal asymptotic state of decaying turbulence. 
    This solution is strongly supported by numerical simulations and experimental data.
    
    \item A demonstration of explicit mathematical equivalence between the Euler ensemble 
    solution and a solvable discrete string theory formulation.
\end{itemize}

These results indicate that turbulent flows, despite their complexity and apparent randomness, 
can possess universal structures that are analytically describable and rigorously verifiable. 
Our findings provide a solid foundation for future theoretical, numerical, and experimental 
studies of turbulence.

%%%%%%%%%% Insert bibliography here %%%%%%%%%%%%%%

\appendix
\setcounter{equation}{0}
\section*{Acknowledgments}
I benefitted from discussions of this work with Camillo de Lellis, Elia Bruè, Stan Palasek, Semon Rezchikov and Jincheng Yang. The most valuable advice came from Albert Schwartz.
This research was supported by the Simons Foundation award ID SFI-MPS-T-MPS-00010544 in the Institute for Advanced Study.

\section*{ORCID}
\noindent Alexander Migdal  - \url{https://orcid.org/0000-0003-2987-0897}
\appendix
\section*{Momentum Loop distribution for noisy velocity}\label{PolygonalW}

In the polygonal approximation, the path integral \eqref{InvFourier} becomes a multiple Fourier integral
\begin{eqnarray}
    &&W(P_1,\dots  P_N) = \int \prod_k d^3 \vec C_k \delta^3(\vec C_N- \vec C_1) \exp{- m_0 |C| +\I \sum_k \vec{C}_k \cdot \Delta\vec{P}_k};
\end{eqnarray}
We rewrite it as a multiple integral over steps $ \vec\eta_k =\Delta \vec C_k $
\begin{eqnarray}
    &&W(P_1,\dots  P_N) = \int \prod_k d^3 \vec \eta_k \delta^3(\sum \vec \eta_k)  \exp{- \sum_k \left(m_0 | \vec \eta_k| - \I \vec{\eta}_k \cdot \vec{P}_k \right)};
\end{eqnarray}
Introducing a Fourier integral for the delta function we find
\begin{eqnarray}
    &&W(P_1,\dots  P_N) = \int d^3 q \int \prod_k d^3 \vec \eta_k  \exp{- \sum_k \left(m_0 | \vec \eta_k| - \I \vec{\eta}_k \cdot (\vec{P}_k- \vec q)\right)};
\end{eqnarray}
Now the integral over $\eta_k$ is calculable ( I skip positive normalization factors, and use known angular integral $\oh\int_{-1}^1 d z \exp{ \I x z} = \frac{\sin x }{x} = \Im \exp{\I x}/x $):
\begin{eqnarray}
    &&\int d^3 \vec \eta_k  \exp{- \left(m_0 | \vec \eta_k| - \I \vec{\eta}_k \cdot (\vec{P}_k- \vec q)\right)} \propto
    \Im \int_0^\infty \eta d \eta \exp{- \left(m_0 \eta - \I \eta |\vec{P}_k- \vec q|\right)}/|\vec{P}_k- \vec q| \propto\nonumber\\
    &&\frac{m_0}{(m_0^2 + |\vec{P}_k- \vec q|^2)^2}
\end{eqnarray}
Collecting the factors we get a POSITIVE measure
\begin{eqnarray}
    W(P_1,\dots  P_N) \propto \int d^3 q \prod_k \frac{m_0}{(m_0^2 + |\vec{P}_k- \vec q|^2)^2}
\end{eqnarray}
In the limit of large $N$ this becomes a Gaussian measure
\begin{eqnarray}
    &&\int d^3 q \prod_k \frac{1}{(1 + |\vec{P}_k- \vec q|^2/m_0^2)^2} \to \nonumber\\
    &&\int d^3 q \exp{-2 \sum_k |\vec{P}_k- \vec q|^2/m_0^2} \propto \exp{- 2\sum_k |\vec{P}_k- \vec P_s|^2/m_0^2};\\
    &&\vec P_s = \sum \vec P_k/N;
\end{eqnarray}
This ensemble assumed fixed $N\to \infty$.  One can make this $N$ variable and introduce a weight  $\exp{-\mu N}, \mu \to 0$ (so called canonical ensemble as opposed to a micro-canonical with fixed $N$).
In that case, in the continuum limit, summation over $N$ becomes an integral $\int d T$ and we get
 \begin{eqnarray}
     W[P] \propto \int_0^\8 d T \exp{ - m^2 T -\int_0^T d t (\vec P(t) - \vec P_s)^2}
 \end{eqnarray}
 with corresponding factors absorbed into $T \propto N/m_0^2, m^2 \propto \mu m_0^2$. This is the distribution quoted in the text of the paper.

% \begin{thebibliography}{0}

% \bibitem{jpap} R. Loren and D. B. Benson, {\it J. Comput.
% System Sci.} {\bf 27}, 400 (1983), \url{https://doi.org/10.1016/0022-0000(83)90050-8}.

% \bibitem{lee} M. Lee, {\it Int. J. Phys.} {\bf 18}, 255 (2010) [Erratum: {\it ibid.} {\bf18}, 440 (2010)].

% \bibitem{colla} OPAL Collab. (G. Abbiendi {\it et al}.),
% {\it Eur. J. Phys. C\/} {\bf 11}, 217 (1999), \url{https://doi.org/10.1007/s100529900181}.

% \bibitem{autbk} C. M. Wang, J. N. Reddy and K. H. Lee, Buckling relationships, in
% {\it Shear Deformable Beams and Plates} (Elsevier, Oxford, 2000), p.~201.

% \bibitem{edbk} T. Tel, in {\it Experimental Study and Characterization
% of Chaos}, ed. Hao Bailin (World Scientific, Singapore, 1990), p.~149,
% \url{https://doi.org/10.1142/1000}.

% \bibitem{rvo} L. Aron, M. Botella and T. Lubart,  Culinary arts: Talent and their development, eds. R. F. Subotnik, P. Olszewski-Kubilius and F. C. Worrell, {\it The Psychology of High Performance$:$ Developing Human Potential into Domain-Specific Talent} (American Psychological Association, 2019), pp. 345--359, \url{https://doi.org/10.1037/0000120-016}.

% \bibitem{seri} Eds. M. Andreas and B. Zimmermann, {\it Crisis on Stage$:$ Tragedy and Comedy in Late Fifth-Century Athens}, Trends in Classics, Vol. 13  (De Gruyter, 2012).
    
% \bibitem{pro} R. Loren, J. Li and D. B. Benson, Deterministic
% flow-chart interpretations, in {\it Proc. 3rd Int. Conf.
% Entity-Relationship Approach}, eds. C. G. Davis and R. T. Yeh
% (North-Holland, Amsterdam, 1983), p.~421.
% \end{thebibliography}
\bibliographystyle{ws-ijmpa}
\bibliography{bibliography}
\end{document}